\documentclass{article}
\usepackage[utf8]{inputenc}
\usepackage{amsmath,amsthm}
\usepackage{amssymb}
\usepackage{xspace}
\usepackage{url}
\usepackage{graphicx}
\usepackage{complexity}
\usepackage{caption}
\usepackage{subcaption}
\usepackage{multirow} 
\usepackage{lscape} 
\usepackage[affil-it]{authblk} 
\usepackage{hyperref}
\usepackage{svg}
\usepackage{enumitem} 

\newtheorem{theorem}{Theorem}

\newtheorem{proposition}[theorem]{Proposition}
\newtheorem{claim}{Claim}[theorem]

\author{Allen Ibiapina \and Ana Silva}
\affil{ParGO Group - Parallelism, Graphs and Optimization\\ Departamento de Matemática - Centro de Ciências\\ Universidade Federal do Ceará, Brazil}

\date{}

\bibliographystyle{plainurl}

\title{Snapshot disjointness in temporal graphs}

\usepackage{complexity}
\usepackage{xspace}

\newcommand{\snapcutproblem}{\textsc{\snapcut{s,z}}\xspace}
\newcommand{\snappathproblem}{\textsc{\snapdisjoint\xspace temporal $s,z$-paths}\xspace}

\newcommand{\N}{{\mathbb{N}\setminus\{0\}}}

\newcommand{\twalk}[1]{temporal $#1$-walk}
\newcommand{\tpath}[1]{temporal $#1$-path}

\newcommand{\snapdisjoint}{snapshot disjoint\xspace}
\newcommand{\snapcut}[1]{snapshot $#1$-cut\xspace}
\newcommand{\maxsnap}[2]{$sp_{#1}(#2)$\xspace}
\newcommand{\minsnap}[2]{$sc_{#1}(#2)$\xspace}


\begin{document}

\maketitle

\begin{abstract}
In the study of temporal graphs, only paths respecting the flow of time are relevant. In this context, many concepts of walks disjointness were proposed over the years, and the validity of Menger's Theorem, as well as the complexity of related problems, has been investigated. 
In this paper, we introduce and investigate a type of disjointness that is only time dependent. Two walks are said to be \emph{snapshot disjoint} if they are not active in a same snapshot (also called timestep). The related paths and cut problems are then defined and proved to be $\W[1]$-hard and $\XP$-time solvable when parameterized by the size of the solution. Additionally, in the light of the definition of Mengerian graphs given by Kempe, Kleinberg and Kumar in their seminal paper (STOC'2000), we define a \emph{Mengerian graph for time} as a graph $G$ that cannot form an example where Menger's Theorem does not hold in the context of snapshot disjointness. We then give a characterization in terms of forbidden structures and provide a polynomial-time recognition algorithm. 
Finally, we also prove that, given a temporal graph $(G,\lambda)$ and a pair of vertices $s,z\in V(G)$, deciding whether at most $h$ multiedges can separate $s$ from $z$ is $\NP$-complete.
\end{abstract}

\section{Introduction}
 A temporal graph can be described as a graph that varies in time. Such objects can be modeled in different ways, usually according to the application being considered, and have appeared in the literature under many names as for instance dynamic networks~\cite{XFJ.03}, temporal networks~\cite{Holme.15}, time-varying graphs~\cite{CFQS.12}, etc. For surveys we refer the reader to~\cite{Holme.15,LVM.18}. In this paper, we consider a \emph{temporal graph} to be a pair $(G,\lambda)$, where $G$ is a multigraph (hereon called just graph) and $\lambda$ is a function, called \emph{timefunction}, that relates each edge to a discrete label telling when such edge is going to be \emph{active}. The value $\max_{e\in E(G)}\lambda(e)$ is called \emph{lifetime} and is denoted by $\tau$. Also, graph $G$ is called the \emph{base graph}. See Figure~\ref{fig:exemplo_intro} for an example.

 \begin{figure}[h]
	\centering
	\begin{subfigure}[b]{0.4\textwidth}
         \centering
         \includegraphics[width = 5cm]{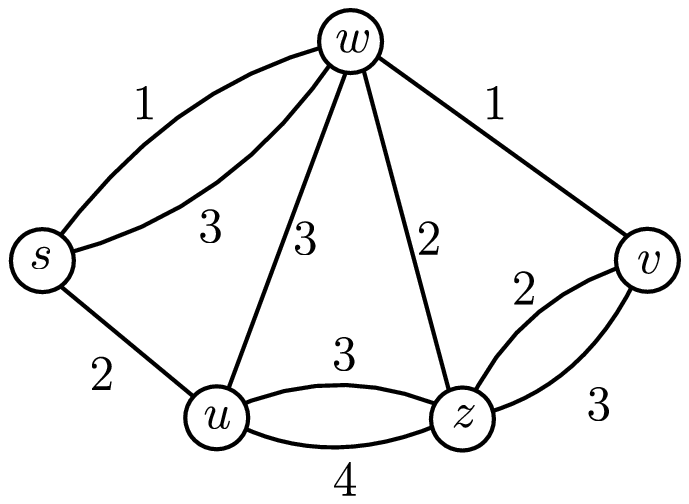}
         \caption{Temporal graph with lifetime~4.}
         \label{fig:exemplo_intro}
     	\end{subfigure}
     	\hfill
	\begin{subfigure}[b]{0.4\textwidth}
         \centering
         \includegraphics[width = 5cm]{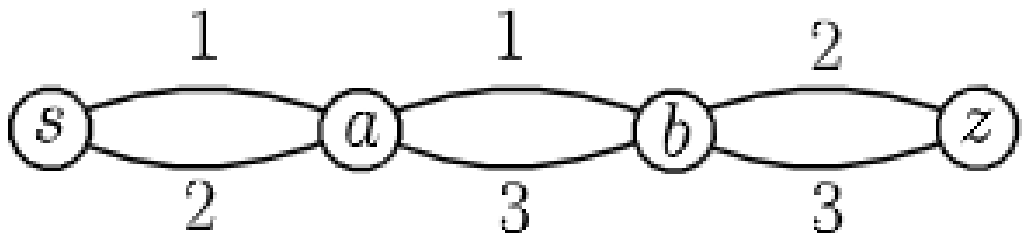}
         \caption{Pair $s,z$ whose cut must be bigger than the number of snapshot disjoint paths.}
         \label{fig:bigger_cut}
     	\end{subfigure}
     \caption{Examples.}\label{fig:intro}
\end{figure}

 Many practical problems are modeled in temporal graphs (see~\cite{Holme.15} for a nice collection of practical examples), and among the most common ones are those related to temporal walks and connectivity. A \emph{temporal walk} is a walk that respects the flow of time; for simplicity, we represent them as sequences of vertices and timesteps. For instance, in Figure~\ref{fig:exemplo_intro}, the sequence $(w,2,z,2,v,3,z,4,u)$ is a temporal walk between $w$ and $u$ (also called temporal $w,u$-walk). A \emph{temporal path} is then defined as a temporal walk whose internal vertices are all distinct. Hence, the previously mentioned temporal $w,u$-walk is not a path, while the walk $(w,2,z,4,u)$ is a temporal path. Additionally, some authors deal only with walks and paths whose edges are active in strictly increasing times; in such case, $(s,1,w,2,z,3,v)$ is a valid temporal $s,v$-path, while $(s,2,u,3,z,3,v)$ is not. To distinguish from these, we say that a walk/path is \emph{strict} if the edges are active in strictly increasing times, and that it is \emph{non-strict} if they are active in non-decreasing times.

 In contrast with classic graph theory, when dealing with walks and paths in temporal graphs, it is not always the case that the problems defined on walks are equivalent to those defined on paths. See for instance~\cite{AFG.17,CHMZ.19,FMNR.22,ILMS.arxiv}. This is not the case for the problems investigated here, and
 this is why we interchangeably use the terms walks and paths.

 Connectivity problems concern the \emph{robustness} of a network, which translates into knowing how many independent (or disjoint) ways there are  to go from one vertex to another, and how easy it is to break such connections. In this paper, we introduce a new robustness concept that relies on the time aspect of a network. To better understand these concepts, consider the following scenario.
 Suppose a temporal graph $(G,\lambda)$ models a communication network. Such network might be prone to interruptions of all communications at a given timestep due to attacks, blackouts, maintenance, etc. A good measure of robustness of such networks could then be the minimum number of timesteps in which the communications must get interrupted in order to break all possible connections between a pair of vertices. A network with higher measure means that it is less susceptible to failing under such interruptions and hence is considered more robust. In Figure~\ref{fig:exemplo_intro}, for instance, if there is an interruption on timesteps $2$ and~$3$, then vertex $s$ cannot relay a message to $z$ anymore, while it still can relay messages to $v$ through the path $(z,1,w,1,v)$. 

 To model such scenario, we say that two temporal  $s,z$-paths $P$ and $Q$ are \emph{\snapdisjoint} if, at any given timestep, at most one between $P$ and $Q$ is traversing any edges. For example, in Figure~\ref{fig:exemplo_intro}, paths $(s,1,w,1,v,2,z)$ and $(s,3,w,3,u,3,z)$ are two \snapdisjoint temporal $s,z$-paths. We also say that a set $S$ of timesteps is a \emph{\snapcut{s,z}} if every temporal $s,z$-path uses an edge active in timestep $i$ for some $i\in S$. For example, in Figure~\ref{fig:exemplo_intro}, $S=\{2,3\}$ is a \snapcut{s,z}. The following problems are then defined.

  \vspace{0.3cm}

        \noindent $\le h$-\snapcutproblem\\
        \textbf{Input.} A temporal graph $(G,\lambda)$, vertices $s,z\in V(G)$, and an integer $h$.\\
        \textbf{Question.} Is there a \snapcut{s,z} in $(G,\lambda)$ of size at most $h$?

        \vspace{0.3cm}
        
        \noindent $\ge k$-\snappathproblem\\
        \textbf{Input. }A temporal graph $(G,\lambda)$, a pair of vertices $s,z\in V(G)$, and a positive integer $k$.\\
        \textbf{Question. }Is there a set of \snapdisjoint  temporal $s,z$-paths in $(G,\lambda)$ of size at least $k$?
        \vspace{0.1cm}

We prove that, when parameterized by $h$ and $k$ respectively, both problems are $\W[1]$-hard, and that this is best possible, i.e., that they are also $\XP$. While the $\XP$ algorithm for \snapcutproblem follows easily from the definition and the fact that we can test all possible cuts in $\XP$ time (namely, $O(\tau^h)$ time), the algorithm for \snappathproblem is much more involved and uses a technique similar to the one applied to find $k$ vertex disjoint paths between $k$ given pairs of vertices (also known as the \textsc{$k$-linkage problem}) in a DAG~\cite{SP.78}. 
 As we will see in the related works, this is the first result of such kind, with all previously defined disjointness either having the related paths problem polynomial-time solvable or para-$\NP$-complete (i.e., $\NP$-complete for fixed values of $k$).

 A celebrated result in classic graph theory tells us that, in a graph $G$ and for every pair $s,z\in V(G)$, the maximum number of internally vertex disjoint $s,z$-paths is equal to the minimum size of an $s,z$-cut (vertices whose removal breaks all $s,z$-paths). This is the well known Menger's Theorem, and it holds on both undirected and directed graphs, as well as for edge-disjoint paths and edge cuts. 
 When translating these concepts to temporal graphs, it is natural to ask whether a version of Menger's Theorem holds. The answer in our context is no, as can be witnessed by the example in Figure~\ref{fig:bigger_cut}. Note that any two temporal $s,z$-paths intersect in some timestep, while there is no \snapcut{s,z} of size~1. Indeed, $(s,1,a,1,b,2,z)$ does not use edges active in timestep $3$, $(s,2,a,3,b,3,z)$ does not use~1, and $(s,1,a,1,b,3,z)$ does not use~2.

 In their seminal paper, Kempe, Kleinberg and Kumar~\cite{KKK.00}, in the context of vertex disjoint temporal paths, defined a Mengerian graph as being a graph where Menger's Theorem would hold for whatever choice of timefunction. They then characterize these graphs when constrained to simple graphs (every multiedge has multiplicity~1), and more recently their result was generalized to allow for multigraphs~\cite{IS.arxiv}. Here, we say that $G$ is \emph{Mengerian for time} if, for every timefunction $\lambda$ and every $s,z\in V(G)$, the maximum number of \snapdisjoint temporal $s,z$-paths in $(G,\lambda)$ is equal to the minimum size of a \snapcut{s,z}. In other words, the \snapdisjoint version of Menger's Theorem always holds on temporal graphs whose base graph is $G$.  We then give the following characterization. The formal definition of an m-topological minor is given in Section~\ref{sec:defs}, but for now it suffices to say that, when subdividing an edge $e$, the multiplicity of the obtained edges is the same as $e$.

	\begin{figure}[h]
		\centering
			\includegraphics[scale=0.65]{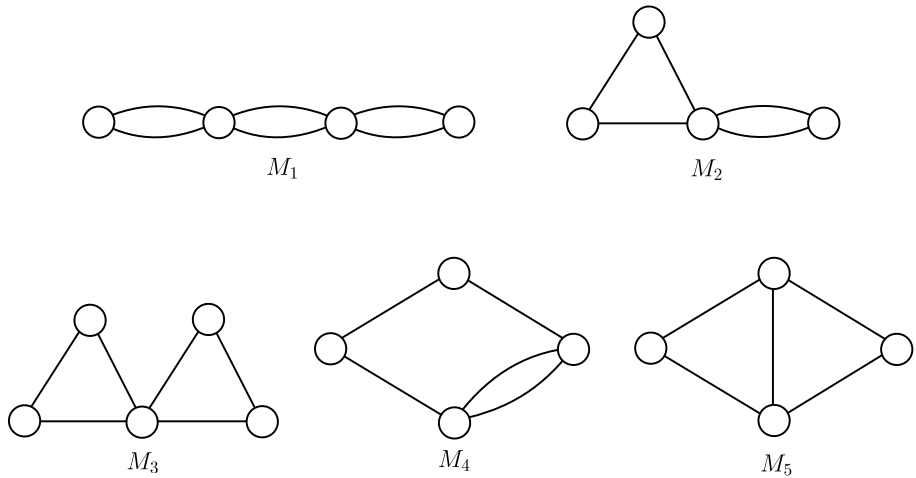}
		\caption{Graphs in the set $\mathcal{M}$.}
		\label{fig:todos}
	\end{figure}

 \begin{theorem}\label{thm:Menger}
 Let $G$ be a graph. Then $G$ is Mengerian for time if and only if $G$ does not have any of the graphs in Figure~\ref{fig:todos} as m-topological minor. Moreover, we can recognize whether $G$ is Mengerian for time in polynomial time.
 \end{theorem}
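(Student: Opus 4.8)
The plan is to prove the characterization in two directions, together with the algorithmic claim, following the template set by Kempe--Kleinberg--Kumar and its multigraph generalization. Let $\mathcal{M}$ denote the set of graphs in Figure~\ref{fig:todos}. First I would establish the easy direction: if $G$ contains some $H\in\mathcal{M}$ as an m-topological minor, then $G$ is not Mengerian for time. For this, for each $H\in\mathcal{M}$ I would exhibit a timefunction $\lambda_H$ and a pair $s,z$ witnessing a gap between the maximum number of \snapdisjoint temporal $s,z$-paths and the minimum \snapcut{s,z} size; the example in Figure~\ref{fig:bigger_cut} is presumably one such witness. Then I would argue that an m-topological minor copy of $H$ in $G$ lets us lift this witness: subdivisions preserve multiplicities by definition, so the certificate paths and the certificate cut transfer to $(G,\lambda)$ after routing through the branch vertices and subdivided paths, and any extra components/edges of $G$ can be handled by assigning them labels (e.g. large or isolated timesteps) that make them irrelevant to $s,z$-connectivity. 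The bookkeeping here -- checking that adding branch paths cannot accidentally create new short \snapdisjoint path collections or shrink the cut -- is routine but must be done for each member of $\mathcal{M}$.

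The harder direction is the converse: if $G$ has no graph of $\mathcal{M}$ as an m-topological minor, then $G$ is Mengerian for time. I would first give a purely structural description of the $\mathcal{M}$-m-topological-minor-free graphs -- this should be a short list of simple families (something like: each block is a cycle, a multiedge, or a small sporadic graph, glued in a restricted tree-like fashion), analogous to the ``$G$ is a caterpillar/cycle-like'' conclusion in~\cite{KKK.00,IS.arxiv}. With such a description in hand, the goal reduces to showing min-max equality for every $\lambda$ and every $s,z$ on each family. On those families the temporal $s,z$-paths are very constrained, so one can typically do this directly: either there is essentially one route (min cut and max packing are both governed by a single multiedge or a single timestep), or the structure is a cycle/theta-like configuration where a short case analysis on which timesteps appear on the two or three internally disjoint segments yields the equality. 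I would organize this as a block-by-block argument: reduce to $s,z$ in a common block (contracting bridges and irrelevant blocks), then verify Menger's equality within each allowed block type.

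For the algorithmic statement, recognizing whether $G$ is Mengerian for time in polynomial time, I would appeal to the forbidden-structure characterization: since $\mathcal{M}$ is a fixed finite set of graphs, testing for an m-topological minor of each fixed $H$ is polynomial -- m-topological minor containment for a fixed pattern reduces to checking, over all choices of branch vertices, the existence of internally disjoint paths of the right multiplicities between them, which is a bounded number of disjoint-paths queries and hence polynomial (and here the patterns are so small that an essentially brute-force search over branch vertices suffices). Alternatively, once the structural description of the minor-free graphs is available, one can just test directly whether $G$ matches that description (checking block structure, block types, and gluing pattern), which is clearly polynomial.

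The main obstacle I expect is producing the exact structural characterization of $\mathcal{M}$-m-topological-minor-free graphs and making sure the list $\mathcal{M}$ is both necessary and sufficient -- i.e. that no further graph needs to be forbidden. Concretely, the risk is in the converse direction: after reducing to a candidate block type, one must be certain that Menger's equality genuinely holds for every timefunction, and a single overlooked block configuration would force an extra member into $\mathcal{M}$. Getting the interplay between multiplicities (which is what distinguishes this from the simple-graph case of~\cite{KKK.00}) and the snapshot-disjointness constraint exactly right in that case analysis is where the real work lies; the reduction-to-blocks and the algorithmic part should then be comparatively routine.
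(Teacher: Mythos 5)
Your overall architecture matches the paper's: necessity is proved exactly as you describe (Proposition~\ref{prop:tstrict} supplies the witnessing timefunctions of Figure~\ref{fig:tbadlabeling}, and Propositions~\ref{prop:nonmeng} and~\ref{prop:subgraphClosed} lift the witness through m-subdivisions and to supergraphs), and the recognition algorithm is indeed read off the block decomposition as you suggest. The organizational difference is that the paper does not first derive a complete structural description of the $\mathcal{M}$-free graphs; it takes a counterexample $(G,\lambda),s,z$ minimizing $|V(G)|+|E(G)|+\tau(\lambda)$ and extracts only as much structure as the forbidden minors force (no edge $sz$, no cycle through both $s$ and $z$, hence the blocks form a path, hence by exclusion of $M_1,\dots,M_5$ the graph reduces to $U(G)$ being a path $s,w,z$ of two multiedges). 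That is a legitimate variant of your plan and costs or buys nothing essential.

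The genuine gap is in the terminal case, which you dispatch as ``a short case analysis on which timesteps appear on the two or three internally disjoint segments.'' After all reductions, the configuration that actually has to be verified is a two-multiedge path $s$--$w$--$z$ with arbitrary multiplicities and an arbitrary timefunction; this is exactly where multiplicities and snapshot-disjointness interact, and a direct case analysis on timesteps does not obviously close. The paper resolves it with two ideas absent from your proposal: (i) Proposition~\ref{prop:lambdainjective}: if $\lambda$ is injective, then snapshot disjointness coincides with edge disjointness and \snapcut{s,z}s correspond to edge cuts, so equality follows from Berman's edge-disjoint temporal Menger theorem~\cite{B.96}; and (ii) if $\lambda$ is not injective, two edges $f,g$ sharing a timestep $\alpha$ must, since each snapshot is simple, lie on the two distinct multiedges and hence form a \tpath{s,z}, so one deletes $\{f,g\}$, applies induction on the number of edges, and augments both the path family (by $(s,f,w,g,z)$) and the cut (by $\alpha$). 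Without something of this kind your sufficiency direction is a statement of where the proof is needed rather than a proof --- a risk you yourself flag in your final paragraph.
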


 Finally, in order to fill an open entry related to multiedge disjoint temporal paths (the definition is presented shortly), we prove in Section~\ref{sec:multiedge_cut} that the related cut problem is $\NP$-complete even if the temporal graph has lifetime equal to~2.

\paragraph*{Related problems. }
 As snapshot disjointness is a newly introduced concept, no previous results exist. We then refer the reader to the many results about Menger's related concepts in temporal graphs. In this context, the vertex disjoint version of Menger's Theorem was proved not to hold by Berman~\cite{B.96}. Since then a number of papers have investigated the complexity of related problems~\cite{KKK.00,ZFMN.20}, as well as new structural concepts like the definition of Mengerian graphs~\cite{KKK.00,IS.arxiv}, and adaptations to temporal vertex disjoint versions~\cite{MMS.19,ILMS.arxiv}. Because our problem is more closely related to edge connectivity, we refrain from commenting in details the results on vertex connectivity, but refer the reader to~\cite{ILMS.arxiv} for an overview of such results. In what follows, we present the edge-related concepts and existing results. These are summarized in Table~\ref{table:results}.

\begin{table}[hbt]
\footnotesize
\centering
\begin{tabular}{c|c|c|c|c|}
			
\multirow{2}{*}{ } &  \multicolumn{2}{|c|}{{\tt Non-strict}} & \multicolumn{2}{|c|}{{\tt Strict}} \\\cline{2-5}
			
		& {\tt $\ge k$-Walks}                   & {\tt $\le h$-Cut}                        & {\tt $\ge k$-Walks}                   & {\tt $\le h$-Cut} \\
\hline
{\tt Multiedge} & $\NP$c \cite{B.96}, if $G$ dir.,& $\NP$c for $\tau=2$           & $\NP$c for $\tau=5$ \cite{IPS.82}  & $\NP$c for $\tau=4$~\cite{B.etal.10} \\
		& even for $k=\tau=2$             & (Theorem~\ref{thm:multiedgecut})  &   and $k=2$~\cite{LMS.90}       & and $\W[1]$ for $h$~\cite{GT.11}\\
			\hline
{\tt Edge}     & \multicolumn{2}{|c|}{Polynomial~\cite{B.96}} & \multicolumn{2}{|c|}{Polynomial~\cite{MMS.19}} \\
\hline
{\tt Snapshot} &  $\W[1]$ for $k$ (Th.~\ref{theo:snappaths})    & $\W[1]$ for $h$ (Th.~\ref{thm:snapcut_W1hard})        &  \multicolumn{2}{|c|}{Open}\\
               &  $\XP$ for $k$ (Th.~\ref{thm:positive_paths}) &  $\XP$ for $h$ (Th.~\ref{theorem:cuts}) & \multicolumn{2}{|c|}{}\\
\hline
{\tt Node dep.} & \multicolumn{2}{|c|}{Open} & \multicolumn{2}{|c|}{Polynomial~\cite{MMS.19}}\\

			\hline
		\end{tabular}
		\caption{On the leftmost column, we specify the type of disjointness. Above, $\tau$ denotes the lifetime of the temporal graph, $k$ denotes the number of paths, $h$ denotes the size of the cut, $\NP$c stands for $\NP$-completness, and $\W[1]$ or $\XP$ stands for $\W[1]$-hardness or $\XP$ results when parameterizing by the size of the solution. Gray cells are proved in this paper.}\label{table:results}
	\end{table}

A set of temporal $s,z$-walks are \emph{edge disjoint} if they share no edges, and are \emph{multiedge disjoint} if they share no multiedges. For example, in Figure~\ref{fig:exemplo_intro}, the paths $(s,1,w,2,z)$ and $(s,3,w,3,u,3,z)$ are edge disjoint, but are not multiedge disjoint, since they share the multiedge with endpoints $sw$. A set of (multi)edges is a \emph{temporal (multi)edge $s,z$-cut} if they intersect every temporal $s,z$-walk. For example, in Figure~\ref{fig:exemplo_intro}, $sw$ and $su$ form a multiedge $s,z$-cut, but if we want an edge $s,z$-cut, we have to pick both edges whose endpoints are $sw$. In some works, instead of using a multigraph, it is used a simple graph together with a timefunction that assigns to each edge a finite set of positive integers. In such cases, an edge in our context would be called a temporal edge, while a multiedge in our context would be a simple edge of the base graph. 

In~\cite{B.96}, Berman showed that the edge problems for non-strict temporal paths are polynomial-time solvable, and that deciding the existence of at least $k$ multiedge disjoint temporal paths is $\NP$-complete, $G$ directed or undirected, and if $G$ is directed, then the same holds even if $k=\tau=2$. Up to our knowledge, no result concerning the cut problem related to multiedges was presented so far. By a simple modification of a proof in~\cite{ZFMN.20}, we present in Section~\ref{sec:multiedge_cut} a proof of $\NP$-completeness of the multiedge cut problem. Our proof works also for the case where $G$ is a directed multigraph. Concerning strict paths, the complexities of these problems follow directly from results about problems on bounded length paths~\cite{B.etal.10,GT.11}. Additionally, the strict problems related to edge disjoint paths was shown to be polynomial-time solvable in~\cite{MMS.19}. We mention that, among all the problems appearing in Table~\ref{table:results}, Menger's Theorem holds only for edge disjoint paths in both the strict and non-strict contexts~\cite{B.96,MMS.19}, and node departure disjoint strict paths, defined below.

 Another related concept is that of \emph{node departure disjoint}, introduced in~\cite{MMS.19}. Given a temporal graph $(G,\lambda)$ with lifetime $\tau$, a set of strict temporal $s,z$-walks is \emph{node departure disjoint} if no two of these paths leave a vertex $u$ in the same timestep. For example, in Figure~\ref{fig:exemplo_intro}, $(s,1,w,2,z)$ and $(s,3,w,3,u,3,z)$ are node departure disjoint. Additionally, a set $S\subseteq V(G)\times [\tau]$ is a \emph{node departure $s,z$-cut} if all strict temporal $s,z$-walks contains an edge departing from $u$ in time $t$, for some $(u,t)\in S$. For example, in Figure~\ref{fig:exemplo_intro}, the set $S = \{(s,1),(s,2),(s,3)\}$ is a node departure $s,z$-cut. 
 In~\cite{MMS.19}, the authors prove that the maximum number of node departure disjoint $s,z$-walks is equal to the minimum size of a node departure $s,z$-cut. Even though the authors do not comment on the complexity of the related problems, their proof leads to a polynomial time algorithm as it consists of building a flow network and proving that the searched values are equivalent to applying the famous Maxflow-Mincut Theorem. Up to our knowledge, their results have not been investigated for the non-strict context.

The text is organized as follows. In Section~\ref{sec:defs}, we present definitions, terminology and some basic results. In Section~\ref{sec:positive}, we present our \XP\space algorithms. In Section~\ref{sec:negative}, we prove that $\le h$-\snapcutproblem and $\ge k$-\snappathproblem are \W[1]-hard when parameterized by $h$ and $k$, respectively.  In Section~\ref{sec:menger}, we characterize Mengerian graphs. Finally, in Section~\ref{sec:multiedge_cut}, we prove that $\le h$-Multiedge cut is \NP-complete, and in Section~\ref{sec:conclusion} we present our concluding remarks.


\section{Definitions and Terminology}\label{sec:defs}

Given positive integers $i,j \in \mathbb{N}$ such that $j\geq i$, we denote by $[i,j]$ the set $\{i,i+1,\dots,j\}$ and by $[j]$ the set $\{1,\dots,j\}$.

A \emph{graph} is a triple $(V,E,f)$ where $V$ and $E$ are finite sets that we call \emph{vertex set} and \emph{edge set} respectively, and $f$ is a function that, for each $e \in E$ associates a pair $xy$ of elements in $V$, where $x\neq y$. We say that edge $e$ is \emph{incident} to $x$ and $y$, that $x,y$ are the \emph{endpoints} of $e$, and that $e$ \emph{connects} $x$ and $y$. We omit $f$ in the rest of the paper and refer simply to the endpoints of $e$ instead. We also call the pair $xy$ a \emph{multiedge}, and the number of edges with endpoints $xy$ is the \emph{multiplicity} of the multiedge $xy$. 
If the multiplicity of each edge is~1, we say that $G$ is a \emph{simple graph}. We denote by $U(G)$ the simple graph obtained from $G$ by decreasing the multiplicity of all multiedges to~1.
See~\cite{west.book} for further basic definitions of graph theory.

Given a  graph $G$ and a set of vertices $Z\subseteq  V(G)$, the \emph{identification of $Z$} is the graph obtained from $G - Z$ by adding a new vertex $z$ and, for every edge $e$ with endpoints $z'u$ where $z' \in Z$ and $u\notin Z$, add an edge $e'$ with endpoints $zu$. The graph $G'$ obtained from $G$ by a \emph{subdivision} of an edge $e$ with endpoints $uv$ is the graph having $V(G)\cup \{z_e\}$ as vertex set, and $E(G-e)\cup\{e', e''\}$ as edge set, where $e'$ has endpoints $uz_e$ and $e''$ has endpoints $z_ev$. Finally, the graph obtained from $G$ by an \emph{m-subidivision} of a multiedge $xy$ is the graph obtained by subdividing all the edges with endpoints $xy$ and then identifying the new vertices. Observe Figure~\ref{fig:defmmenor} for an illustration of these definitions. The definition of m-subdivision has been introduced in~\cite{IS.arxiv}. Given a graph $H$, if $G$ has a subgraph that can be obtained from m-subdivisions of $H$, then we say that $H$ is an \emph{m-topological minor} of $G$.

	\begin{figure}[h]
		\centering
		\includegraphics[width = 8cm]{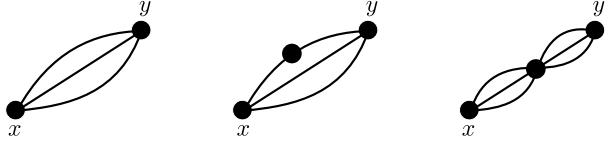}
		\caption{From left to right: the multiedge $xy$, the subdivision of an edge with endpoints $xy$, and the m-subdivision of $xy$.}
		\label{fig:defmmenor}
	\end{figure}

A \emph{temporal graph} is a pair $(G,\lambda)$ where $G$ is a graph and $\lambda \colon E(G) \to \N$. Depending on the context we refer to the elements of $\N$ as \emph{timesteps}. If an edge $e$ is such that $\lambda(e)= \alpha$ we say that $e$ is \emph{active} or \emph{appears} at timestep $\alpha$. A \emph{\twalk{x_1,x_q}} in $(G,\lambda)$ is a sequence $P$ that alternates vertices and edges $(x_0,e_1,x_1,\dots, e_{q},x_q)$ such that for every $i \in [q]$, $e_i$ is an edge between $x_i$ and $x_{i-1}$ and $\lambda(e_1)\leq \dots \leq \lambda(e_q)$. If $x_i\neq x_j$ for every $i,j \in [q]$ with $i\neq j$, we say that such temporal walk is a \emph{temporal path}. Moreover, we define $V(P)=\{x_1,\dots,x_q\}$ and $E(P)=\{e_1,\dots,e_q\}$. For our purposes, we can assume that the subgraph active at a given timestep is simple, i.e., that if $e$ and $e'$ have both endpoints $xy$, then $\lambda(e)\neq \lambda(e')$. Such assumption allows us to define a path as a sequence of vertices and timesteps $(x_0,t_1,x_1,\dots,t_{q-1},x_q)$ such that, for each $i \in [q]$, there is an edge connecting $x_{i} x_{i-1}$ active at timestep $t_i$. The \emph{lifetime} of $(G,\lambda)$ is denoted by $\tau(\lambda)$ and is the maximum integer such that there is an edge of $G$ active at such timestep. For each timestep $i \in \N$, the \emph{$i$-th snapshot} of $(G,\lambda)$ is the subgraph of $G$ defined as $H=(V(G),\lambda^{-1}(i))$. 


Let $(G,\lambda)$ be a temporal graph with lifetime $\tau$. Also, let $s,z \in V(G)$ be vertices in $G$ and $Q,J$ \tpath{s,z}s. We say that $Q$ and $J$ are \emph{\snapdisjoint}\space if $\lambda(E(Q)) \cap \lambda(E(J)) = \emptyset$. A subset $S \subseteq [\tau]$ is a \emph{\snapcut{s,z}} if every \tpath{s,z} uses an edge active at some timestep in $S$.
We denote by \maxsnap{G,\lambda}{s,z} the maximum number of \snapdisjoint\space\tpath{s,z}s and by \minsnap{G,\lambda}{s,z} the minimum size of a \snapcut{s,z}. Observe that if the above definitions are made in terms of temporal paths, then these parameters would not change. 
A graph $G$ is \emph{Mengerian (for time)} if, for every timefunction $\lambda$ on $E(G)$, and every $s,z \in V(G)$, $s\neq z$, we have that \maxsnap{G,\lambda}{s,z}$=$\minsnap{G,\lambda}{s,z}. 
The following will be useful later, and the proofs can be found in Appendix~\ref{app:propositions}.

\begin{proposition}\label{prop:nonmeng}
    If $G$ is non-Mengerian, then an m-subdivision of $G$ is also non-Mengerian. 
\end{proposition}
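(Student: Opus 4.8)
The plan is to take a timefunction $\lambda$ on $G$ and a pair $s,z$ witnessing that $G$ is non-Mengerian, i.e.\ with $\maxsnap{G,\lambda}{s,z} < \minsnap{G,\lambda}{s,z}$, and lift it to a timefunction $\lambda'$ on the m-subdivision $G'$ of $G$ (for the same pair $s,z$, which survives in $G'$) witnessing non-Mengerianness of $G'$. The natural choice is: for every multiedge $xy$ of $G$ that gets m-subdivided into the two multiedges $xz_{xy}$ and $z_{xy}y$, give each edge of both new multiedges the same label that the corresponding original edge of $xy$ had under $\lambda$; edges not incident to a subdivided multiedge keep their old label. (If only some multiedges are subdivided, we do this for each subdivided one; it suffices to handle a single m-subdivision and iterate.)

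First I would set up a correspondence between temporal $s,z$-walks in $(G,\lambda)$ and in $(G',\lambda')$: a temporal walk in $G$ that traverses a subdivided multiedge $xy$ at time $t$ corresponds to the temporal walk in $G'$ that traverses $xz_{xy}$ then $z_{xy}y$, both at time $t$ — this is still time-respecting since the two labels are equal, and non-decreasingness is preserved. Conversely, any temporal $s,z$-walk in $G'$ must traverse the two halves of a subdivided multiedge consecutively and at the same time (since $z_{xy}$ has no other incident edges), so it projects to a temporal $s,z$-walk in $G$. Crucially, under this bijection the set of \emph{timesteps} used by a walk is identical on both sides, because the two halves carry the same label as the original edge. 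Hence two walks are \snapdisjoint in $(G,\lambda)$ if and only if the corresponding walks are \snapdisjoint in $(G',\lambda')$, which gives $\maxsnap{G,\lambda}{s,z} = \maxsnap{G',\lambda'}{s,z}$; and a set $S\subseteq[\tau]$ is a \snapcut{s,z} in $(G,\lambda)$ if and only if it is one in $(G',\lambda')$, giving $\minsnap{G,\lambda}{s,z} = \minsnap{G',\lambda'}{s,z}$. Therefore the strict inequality transfers, and $G'$ is non-Mengerian.

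I do not expect a serious obstacle here; the one point that needs care is the direction ``walk in $G'$ projects to a walk in $G$'': one must argue that a temporal $s,z$-walk in $G'$ cannot ``get stuck'' or behave pathologically at a subdivision vertex $z_{xy}$ — but since $z_{xy}\notin\{s,z\}$ and its only neighbours are $x$ and $y$, any visit to $z_{xy}$ enters from one of $x,y$ and must leave to the other (or immediately back, which would be a non-simple detour that we may short-circuit when we only care about walks, or note that the definition of temporal path in the paper forbids repeated vertices so it leaves to the other side), and the equal-label choice makes the two traversals simultaneous, so the projection is a legitimate temporal walk with the same timestep set. A second minor point is that the proposition is stated for \emph{an} m-subdivision of $G$, i.e.\ possibly subdividing several multiedges; since m-subdivisions of distinct multiedges act on disjoint new vertices, the argument above composes without interference, so it is enough to prove the single-multiedge case and iterate. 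This completes the plan.
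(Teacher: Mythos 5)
Your overall strategy coincides with the paper's: lift a witness $(\lambda,s,z)$ of non-Mengerianness to the m-subdivision by giving the two new multiedges the same labels as the original one, and transfer both parameters. There is, however, one step that is wrong as stated: it is \emph{not} true that a \twalk{s,z} in $G'$ must traverse the two halves of a subdivided multiedge at the same time. If the multiedge $xy$ has multiplicity at least~$2$ (the only interesting case for an m-subdivision), say with labels $1$ and $2$, then both new multiedges $xz_{xy}$ and $z_{xy}y$ carry the labels $\{1,2\}$, and a temporal walk may enter $z_{xy}$ from $x$ at time $1$ and leave towards $y$ at time $2$. Your equal-label assignment makes a simultaneous traversal \emph{possible}, not forced, so the asserted bijection with identical timestep sets fails in the direction from $G'$ to $G$, and with it the literal ``if and only if'' statements you derive from it.

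The repair is exactly what the paper does. When projecting a temporal walk of $(G',\lambda')$ down to $(G,\lambda)$, let $f_x$ and $f_y$ be the two edges used at $z_{xy}$ and replace them by an edge of the original multiedge active at time $\max\{\lambda'(f_x),\lambda'(f_y)\}$; such an edge exists because the label sets of the two new multiedges both equal $\lambda(D)$, where $D$ is the set of edges with endpoints $xy$. The projected walk is still temporal, and its set of timesteps is a \emph{subset} of that of the original walk (rather than equal to it). This weaker statement still suffices: projection preserves \snapdisjoint ness, giving \maxsnap{H,\lambda'}{s,z} $\le$ \maxsnap{G,\lambda}{s,z}, while the lifting direction (where the timestep sets really are equal) shows that every \snapcut{s,z} of $(H,\lambda')$ is a \snapcut{s,z} of $(G,\lambda)$, giving \minsnap{G,\lambda}{s,z} $\le$ \minsnap{H,\lambda'}{s,z}. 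Chaining these with the strict inequality in $(G,\lambda)$ yields the conclusion, so your plan goes through once this one claim is corrected. The remaining points you raise (handling $z_{xy}\notin\{s,z\}$, iterating over several subdivided multiedges) are fine.
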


\begin{proposition}
    \label{prop:subgraphClosed}
		$G$ is Mengerian if and only if $H$ is Mengerian, for every $H\subseteq G$. 
\end{proposition}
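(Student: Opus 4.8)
The plan is to prove the non‑trivial direction by contraposition: since $G\subseteq G$, the ``$\Leftarrow$'' direction is immediate, so it remains to show that if some $H\subseteq G$ is non‑Mengerian then $G$ is non‑Mengerian. I would first reduce to a single atomic operation: write $G$ as obtained from $H$ by successively adding isolated vertices and then single edges between vertices already present, and observe that it suffices to show each such step takes a non‑Mengerian graph to a non‑Mengerian graph. Adding an isolated vertex is trivial, since the witnessing timefunction, the vertices $s,z$, and all temporal $s,z$-paths and snapshot $s,z$-cuts are untouched. So the whole content is the following: given a non‑Mengerian $H$ with a witness $(\lambda_H,s,z)$, i.e.\ with $p:=sp_{H,\lambda_H}(s,z)<sc_{H,\lambda_H}(s,z)=:c$, and an extra edge $e$ with endpoints $u,v$, extend $\lambda_H$ to a timefunction $\lambda'$ on $H':=H+e$ with $sp_{H',\lambda'}(s,z)<sc_{H',\lambda'}(s,z)$.

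The key observation driving the construction is that a new edge carrying a label strictly larger than $\tau(\lambda_H)$ can only occur as the \emph{last} edge of a temporal $s,z$-path (any edge after it would have to carry the same maximum label, forcing a repeated edge and hence a repeated vertex), and, dually, a label strictly smaller than every label of $\lambda_H$ can only occur as the \emph{first} edge. I would then split into three cases according to how $e$ meets $\{s,z\}$. If $z$ is not an endpoint of $e$, set $\lambda'(e):=\tau(\lambda_H)+1$; then no temporal $s,z$-path in $H'$ uses $e$, so the temporal $s,z$-paths of $H'$ are exactly those of $H$, giving $sp_{H',\lambda'}(s,z)=p<c=sc_{H',\lambda'}(s,z)$. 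If $z$ is an endpoint of $e$ but $s$ is not, relabel $\lambda_H$ upward by $1$ and set $\lambda'(e):=1$; any use of $e$ would have to be the first edge, forcing $s$ to be an endpoint of $e$, so again no temporal $s,z$-path uses $e$, and an order‑preserving relabeling changes neither $sp$ nor $sc$ for $H$, so $p<c$ as before. The remaining case is $\{u,v\}=\{s,z\}$: here take $\lambda'(e):=\tau(\lambda_H)+1$ once more; the only new temporal $s,z$-path is the single edge $(s,e,z)$, which is active exclusively at the fresh timestep $\tau(\lambda_H)+1$. A short argument then gives $sp_{H',\lambda'}(s,z)=p+1$ (append this path to a maximum snapshot‑disjoint family in $(H,\lambda_H)$, which uses only earlier timesteps) and $sc_{H',\lambda'}(s,z)=c+1$ (every snapshot $s,z$-cut must contain the fresh timestep, and deleting it leaves a snapshot $s,z$-cut of $(H,\lambda_H)$), so $p+1<c+1$.

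I expect the main obstacle to be exactly this last case, the new edge joining $s$ and $z$ directly, because there one genuinely cannot avoid creating a new $s,z$-path and must control both parameters at once; the clean resolution is the remark that this unique new path occupies a private timestep, so it increments both sides of the inequality by exactly one. A secondary point requiring care is the reduction itself — one must make sure $s$ and $z$ stay fixed along the chain of atomic additions and that ``non‑Mengerian'' is inherited step by step, which is immediate once the single‑step claim is in place. The rest is routine bookkeeping: checking that an order‑isomorphic relabeling of a timefunction preserves temporal paths, snapshot disjointness and snapshot cuts (hence $sp$ and $sc$), and that a new edge with maximal (resp.\ minimal) label is confined to the end (resp.\ start) of any temporal path in which it appears.
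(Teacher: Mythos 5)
Your proof is correct. The underlying idea is the same as the paper's --- assign to the edges of $G$ outside $H$ labels lying outside the range used on $E(H)$, so that such an edge can only sit at the very start or very end of a temporal path and hence cannot occur in a temporal $s,z$-path except as a direct $s,z$-edge --- but you organize it as an induction on single edge additions with a three-way case split, whereas the paper performs one global relabeling (shift $\lambda$ on $E(H)$ up by one, give new edges incident to $z$ the label $1$, and give all remaining new edges the label $\max\lambda(E(H))+2$) and then asserts that under this timefunction no temporal $s,z$-path leaves $H$. Your edge-by-edge version costs some bookkeeping (the chain of intermediate graphs, invariance of the two parameters under order-preserving relabelings) but buys precision exactly where the paper is loose: if $E(G)\setminus E(H)$ contains an edge with endpoints $s$ and $z$, a new one-edge temporal $s,z$-path is unavoidable under any labeling, so the paper's claim that its $\lambda'$ admits no temporal $s,z$-path outside $H$ is not literally true in that case (that edge receives label $1$ there and yields the path $(s,e,z)$); the construction still succeeds because, as your third case makes explicit, this path occupies a private timestep and therefore increments both the maximum number of snapshot disjoint paths and the minimum cut size by exactly one, preserving the strict inequality. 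In short, your argument is a correct and slightly more careful rendering of the same strategy.
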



 \section{Positive Results}\label{sec:positive}

        In this section, we give $\XP$ algorithms for both \snapcutproblem and \snappathproblem. Given the results of Section~\ref{sec:negative}, unless $\FPT=\W[1]$-hard, $\XP$ algorithms are best possible from the point of view of parameterized complexity. The first algorithm is quite simple and consists of the usual approach of testing all possible cuts.

        \begin{theorem}\label{theorem:cuts}
            Given a temporal graph $(G,\lambda)$ of lifetime $\tau$,  a positive integer $h$ and $s,z \in V(G)$, we can solve $\le h$-\snapcutproblem in  $O(\tau^{h}\cdot (\lvert V(G)\rvert +\lvert E(G)\rvert))$. 
        \end{theorem}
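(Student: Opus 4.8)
The plan is to solve $\le h$-\snapcutproblem by brute force over all candidate cuts, which is exactly the naive $\XP$ strategy hinted at in the text. First I would enumerate every subset $S\subseteq[\tau]$ with $|S|\le h$; there are $\sum_{i=0}^{h}\binom{\tau}{i} = O(\tau^h)$ such subsets. For each candidate $S$, I need to test whether $S$ is a \snapcut{s,z}, i.e., whether every \twalk{s,z} uses an edge active at some timestep in $S$. Equivalently, $S$ is \emph{not} a \snapcut{s,z} exactly when there exists a temporal $s,z$-walk that avoids all timesteps in $S$, which is the same as asking whether $s$ can reach $z$ using only edges whose label lies in $[\tau]\setminus S$.

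The key observation that makes each test cheap is that reachability via temporal walks restricted to an allowed set of timesteps can be decided by a single linear-time traversal. Concretely, I would delete from $(G,\lambda)$ all edges $e$ with $\lambda(e)\in S$, obtaining a temporal graph $(G',\lambda')$, and then test whether there is any temporal $s,z$-walk in $(G',\lambda')$. A standard way to do this in $O(|V(G)|+|E(G)|)$ time is a time-respecting BFS/foremost-arrival computation: process snapshots in increasing order of timestep, maintaining the set of vertices reachable from $s$ by a temporal walk using only labels seen so far; when processing snapshot $i$, propagate reachability through the (simple) graph $\lambda^{-1}(i)\setminus\{e:\lambda(e)\in S\}$ via connected components. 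Since each edge and vertex is touched a constant number of times across all snapshots, this runs in $O(|V(G)|+|E(G)|)$. If $z$ is never marked reachable, then $S$ is a valid \snapcut{s,z}; otherwise it is not. We answer \textbf{yes} iff some candidate $S$ passes the test. Correctness is immediate from the definition of \snapcut{s,z} together with the remark in Section~\ref{sec:defs} that the parameter \minsnap{G,\lambda}{s,z} is unchanged whether we speak of walks or of paths, so restricting attention to walks loses nothing.

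Putting the pieces together gives total running time $O(\tau^h)$ candidates times $O(|V(G)|+|E(G)|)$ per candidate, i.e., $O(\tau^h\cdot(|V(G)|+|E(G)|))$, as claimed. There is essentially no serious obstacle here: the only point needing a little care is arguing that temporal reachability under a forbidden set of timesteps can indeed be decided in linear time by the snapshot-sweep, which relies on the assumption (made in Section~\ref{sec:defs}) that each snapshot is simple and on the fact that non-strict walks only require labels to be non-decreasing, so a single left-to-right pass over the timesteps suffices without revisiting earlier snapshots. One should also note that edges with label outside $[1,\tau]$ do not exist by definition of $\tau$, so the enumeration range $[\tau]$ is exhaustive. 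This establishes the theorem.
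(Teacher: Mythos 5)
Your proposal is correct and follows essentially the same route as the paper: enumerate the $O(\tau^h)$ candidate timestep sets, delete the edges active at those timesteps, and test temporal $s,z$-reachability in the remaining temporal graph in linear time. The only difference is that you spell out the linear-time snapshot-sweep reachability test, whereas the paper delegates it to a citation; the argument is otherwise identical.
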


        The next algorithm is much more involved, and uses a technique similar to the one used to find disjoint paths between given pairs of vertices in a DAG~\cite{SP.78}.

        \begin{theorem}\label{thm:positive_paths}
            Given a temporal graph $(G,\lambda)$, vertices $s,z \in V(G)$ and a positive integer $k$, we can solve \snappathproblem in time $O(m^k)$, where $m = \lvert E(G)\rvert$.
        \end{theorem}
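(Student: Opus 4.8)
The plan is to adapt the classical dynamic-programming / auxiliary-DAG technique of Shiloach and Perl~\cite{SP.78} for the $k$-linkage problem in a DAG to our setting. The key observation is that a collection of $k$ snapshot-disjoint temporal $s,z$-paths can be built ``in time order'': we process timesteps $1,2,\dots,\tau$ one at a time, and at each timestep at most one of the $k$ paths may be active. So I would define a configuration that records, for each of the $k$ paths, the current ``frontier'' vertex reached so far, together with a small amount of bookkeeping saying which path (if any) is currently allowed to move and whether each path has already started / already finished. First I would build an auxiliary directed graph $\mathcal{D}$ whose nodes are these configurations. A natural encoding of a configuration is a tuple $(v_1,\dots,v_k)$ of vertices (the frontiers, with a special symbol for ``not yet started'' or ``already at $z$''), plus an index $j\in\{0,1,\dots,k\}$ indicating which path is currently traversing edges in the snapshot under consideration ($j=0$ meaning we are between snapshots), plus the current timestep $t$. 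Since $t$ ranges over $[\tau]$, $j$ over $k+1$ values, and each $v_i$ over $O(|V(G)|)$ values, the number of configurations is $O(\tau\cdot k\cdot |V(G)|^k)$; a more careful count, charging each edge of a path to an edge of $G$, keeps everything within the claimed $O(m^k)$ bound.

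The transitions of $\mathcal{D}$ would be of two types. \emph{Advancing} edges: from a configuration in which path $j$ sits at vertex $v_j$ at timestep $t$, for every edge $e$ of $G$ with $\lambda(e)=t$ incident to $v_j$ and leading to a vertex $w$ not yet used by \emph{any} of the $k$ paths (to preserve internal-vertex-disjointness within each path and, more importantly here, to ensure the paths are genuinely distinct / can be made internally disjoint), we add an arc to the configuration with $v_j$ replaced by $w$. \emph{Switching} edges: arcs that either freeze the active path (set $j:=0$), pick a new active path $j'$ that is still allowed to move in a snapshot $\ge t$, or advance the timestep $t\mapsto t+1$ (resetting which paths are eligible). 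The source node of $\mathcal{D}$ is the configuration with every $v_i=s$, $t=1$, $j=0$; the sink is the configuration with every $v_i=z$. A directed $s$-$z$ path in $\mathcal{D}$ then corresponds exactly to $k$ temporal $s,z$-walks that are pairwise snapshot-disjoint, and conversely. Finally, one invokes the standard fact (already used implicitly in the introduction, and following from the observation after the definition of \minsnap{G,\lambda}{s,z}) that a family of snapshot-disjoint temporal $s,z$-\emph{walks} can be shortcut to a family of snapshot-disjoint temporal $s,z$-\emph{paths} of the same cardinality, so deciding the walk version suffices. Reachability in $\mathcal{D}$ is linear in its size, giving the running time.

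The main obstacle, and the part I expect to require the most care, is twofold. First, the disjointness constraint we need is unusual: snapshot-disjointness is a purely temporal condition, so within a single snapshot the \emph{active} path may freely reuse vertices visited by \emph{inactive} paths at other timesteps — except that we still need each path to be internally vertex-disjoint and we need the $k$ paths to be distinct objects. Getting the vertex-blocking rule exactly right in the configuration (which vertices are forbidden to whom, and when) without either over-counting or losing solutions is the delicate bookkeeping step. Second, the running-time bound $O(m^k)$ rather than the naive $O((\tau n)^{k+O(1)})$ needs an argument: one should argue that configurations reachable from the source can be charged so that the $k$-tuple of frontier positions is essentially a $k$-tuple of \emph{edges} of $G$ (the last edge traversed by each path), and that the extra factors $\tau$ and $k$ can be absorbed — e.g.\ by noting that the timestep $t$ is determined by (or bounded in terms of) the edge labels already in play, or simply by folding these polynomial factors into the analysis since $m^k$ already dominates $\tau\cdot k\cdot n^k$ whenever $G$ is connected with at least two edges. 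I would also need to verify carefully that $\mathcal{D}$ is acyclic (so that reachability, not longest-path, is what we compute) — this follows because every arc either advances $t$, or advances the active path along an edge (strictly shrinking the set of available vertices), or performs a switch that I would order so as to be monotone — and that no path in $\mathcal{D}$ can ``cheat'' by having two paths active in the same snapshot, which is guaranteed by the single active-index $j$ in the configuration.
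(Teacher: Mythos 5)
Your high-level plan (a configuration digraph in which exactly one of the $k$ paths advances at a time, followed by a reachability test) is the same as the paper's, but two of your design choices create genuine gaps. The most serious is the advancing rule requiring the new vertex $w$ to be ``not yet used by \emph{any} of the $k$ paths.'' Snapshot disjointness is a purely temporal condition: snapshot-disjoint paths may, and in general must, share vertices (in Figure~\ref{fig:exemplo_intro}, the snapshot-disjoint paths $(s,1,w,1,v,2,z)$ and $(s,3,w,3,u,3,z)$ both pass through $w$; more drastically, all $k$ paths must share any cut vertex of the base graph). So your blocking rule rejects yes-instances. It is also not implementable in your state space, since a configuration records only the $k$ frontier vertices, not the sets of visited vertices. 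You correctly flag this as ``the delicate bookkeeping step,'' but the resolution is not more bookkeeping: the right move is to impose \emph{no} cross-path or within-path vertex constraint at all, build $k$ snapshot-disjoint temporal \emph{walks}, and invoke the observation (already in Section~\ref{sec:defs}) that the walk and path versions of the problem coincide. The paper does exactly this: its digraph $D$ has no vertex-exclusion condition whatsoever; the only exclusion is that a tuple may not contain two edges with the same $\lambda$-value, which is precisely snapshot disjointness.

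The second gap is the running time. A state space of size $O(\tau\cdot k\cdot |V(G)|^k)$ is \emph{not} $O(m^k)$: even after compressing labels one only has $\tau\le m$ and $n\le m+1$, giving $O(k\, m^{k+1})$, and your suggested fixes (``$t$ is determined by the edge labels already in play,'' ``$m^k$ dominates $\tau k n^k$'') do not hold for vertex-indexed configurations. The way to actually reach $O(m^k)$ is to index each coordinate by the \emph{last edge traversed} by that path (an element of $E(G)\cup\{s,z\}$) rather than by its frontier vertex plus a global clock and an active index; then the ``current time'' of each path is the label of its stored edge, the rule for which path may move becomes ``the coordinate whose stored edge has minimum label,'' and the state count is $(m+2)^k$ with no extra factors. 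This is the paper's construction, and your charging intuition is pointing at it — but as written your proposal neither proves correctness nor achieves the stated bound without being reworked into that edge-indexed form.
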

        \begin{proof}
          We construct a digraph $D$ with vertices $s^*$ and $z^*$ such that $|V(D)|=O(m^k)$ and there is an $s^*,z^*$-path in $D$ if and only if there are $k$ \snapdisjoint\space \tpath{s,z}s in $(G,\lambda)$.

          The vertex set of digraph $D$ is equal to the $k$-tuples formed by edges of $G$, together with vertices $s$ and $z$; formally $V(D)\subseteq F^k$, where $F = E(G)\cup \{s,z\}$. Vertex $s^*$ is set to be equal to $(s,\ldots,s)$, while vertex $z^*$ is set to be equal to $(z,\ldots,z)$. Each dimension of $V(D)$ represents one of the desired $k$ disjoint paths, and a set of snapshot disjoint temporal $s,z$-paths $P_1,\ldots,P_k$ will be represented by an $s^*,z^*$-path $P$ in $G$, as previously said. So $s^*$ represents the starting point, and $z^*$ represents the finish point of every temporal $s,z$-path. Then, when an edge of $D$ is traversed by $P$, we want that one of the $k$ paths also traverses an edge. Because we want to allow that only one of the paths gets closer to $z$ with each step of $P$, there will be an edge from $\alpha \in V(D)$ to $\beta\in V(D)$ only if exactly one position of $\alpha$ and $\beta$ differ. Not only this, but we want that, at each step of $P$, the path $P_i$ that gets closer to $z$ is the one whose last traversed edge occurs the earliest among all the $P_i$'s. In the next paragraph, we formally construct digraph $D$.

          As previously said, let $F = E(G)\cup \{s,z\}$. Along the construction, we will be referring to Figures~\ref{fig:positive_graph_relationM} and~\ref{fig:positive_digraph}. Because we want to avoid simultaneous traversal of paths that intersect in a snapshot, we only consider elements of $F^k$ whose edges of $G$ all appear in distinct snapshots. Indeed, if $k=2$ and we allow for instance the existence of vertex $(e,e')$ such that $t=\lambda(e) = \lambda(e')$, then this would mean that the constructed paths $P_1$ and $P_2$ intersect in timestep $t$. Therefore, we define $V$ as formalized below. Observe that this implies, in Figure~\ref{fig:positive_digraph}, that vertices $\{(e,e)\mid e\in E(G)\}\cup \{(az_2,sb),(sb,az_2)\}$ do not exist in $V(D)$, where $az_2$ denotes the edge with endpoints $az$ active in timestep~2.

          \[V =  \{(u_1,\dots,u_k) \in F^k \mid \forall i,j \in [k]\mbox{ with $i\neq j$, we have }\lambda(u_i)\neq \lambda(u_j)\mbox{ or } u_i=u_j \in \{s,z\} \} \]

          \begin{figure}[h]
  \centering
  \begin{subfigure}[b]{0.4\textwidth}
         \centering
         \includegraphics[width = 4cm]{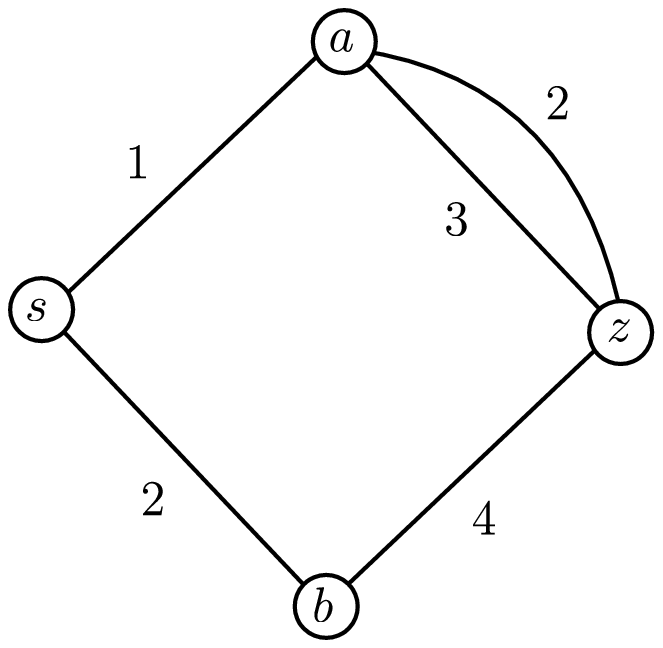}
         \caption{Example of temporal graph $(G,\lambda)$.}
         \label{fig:positive_graph}
      \end{subfigure}
      \hfill
  \begin{subfigure}[b]{0.4\textwidth}
         \centering
         \includegraphics[width = 4cm]{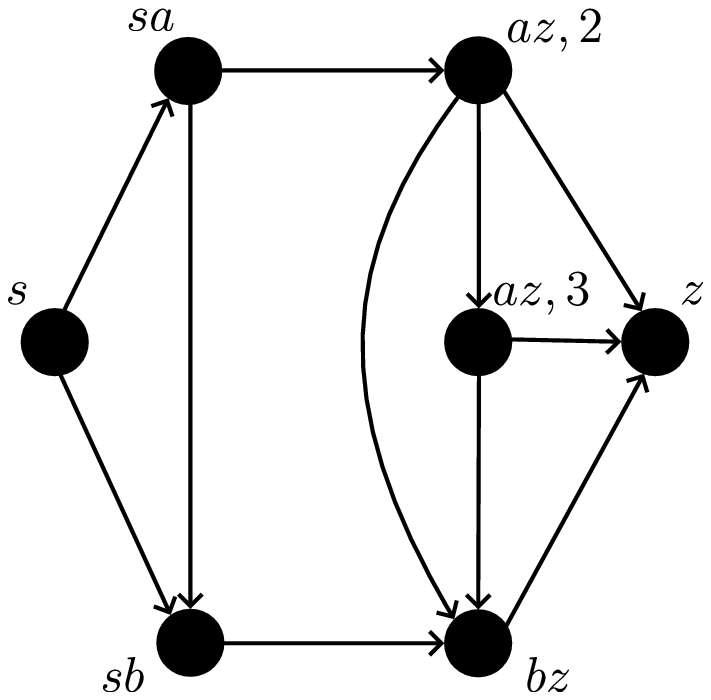}
         \caption{Auxiliary graph $M$ on the set $F = E(G)\cup \{s,z\}$.}
         \label{fig:positive_relationM}
      \end{subfigure}
     \caption{Example of construction in Theorem~\ref{thm:positive_paths}.}\label{fig:positive_graph_relationM}
\end{figure}

Now, we define the edge set of $D$. For this, we first construct an auxiliary graph $M$ whose vertex set is equal to $F$; observe Figure~\ref{fig:positive_relationM} to follow the construction. First of all, we want that a traversal of an edge in $D$ translates into a valid traversal in $(G,\lambda)$. Therefore, for every pair $e,f\in F$, add to $M$ an edge from $e$ to $f$ only if $e$ can be followed by $f$ in a \tpath{s,z} in $(G,\lambda)$. Formally, add $ef$ in the following cases: for every $e\in E(G)$, and every $f\in E(G)$ adjacent to $e$ and such that $\lambda(e)\le \lambda(f)$; for $e=s$ and every $f\in E(G)$ incident to $s$; and for every $e\in E(G)$ incident to $z$ and $f=z$.  

Finally, as previously said, we want that at each edge traversal of an $s^*,z^*$-path in $D$, the path in $(G,\lambda)$ that is getting closer to $z$ is that one whose last used edge is the earliest among all the other paths. To help with this, we also define $\lambda(s)$ to be equal to $0$, and $\lambda(z)$ to be equal to $\tau+1$, where $\tau$ is equal to the lifetime of $(G,\lambda)$. This means intuitively that we give always priority to leave $s$, and that, once we reach $z$ in any dimension, then we cannot depart from $z$ anymore. So, given a vertex $\alpha=(u_1,\ldots,u_k)\in V(D)$, we add an edge from $\alpha$ to $\beta\in V(D)$ if and only if: 
\begin{itemize}
 \item $\beta$ differ from $\alpha$ in exactly one position, $i$; 
 \item $i$ is such that $\lambda(u_i)\le \min_{j\in [k]}\lambda(u_j)$; and
 \item By letting $u'_i$ be the value in the $i$-th position of $\beta$, we have that $u_iu'_i$ is a valid move, i.e., that $u_iu'_i\in E(M)$. 
\end{itemize}

Observe Figure~\ref{fig:positive_digraph}. Another way of seeing this construction is by starting with copies of $M$ on each row and column of $D$, then removing the vertices that do not belong to $D$, and finally removing from row/column $e$ any edge leaving $f$ with $\lambda(f)>\lambda(e)$.

\begin{figure}[h]
    \centering
    \includegraphics[width = 10cm]{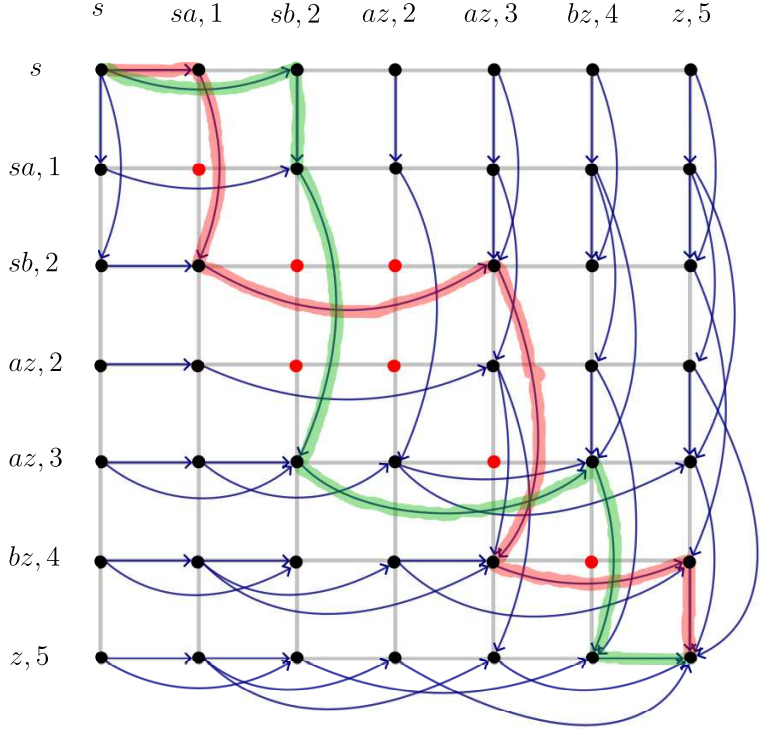}
    \caption{Digraph $D$ related to the temporal graph in Figure~\ref{fig:positive_graph}; value $k=2$ is being used, which means that $V(D)\subseteq F^2$. Each row and column is labeled with an element $e$ of $F$, together with the value $\lambda(e)$; this will help in the construction. A vertex $(e,f)$ of $D$ is represented in the intersection of row $e$ and column $f$. Red dots in the figure represent the fact that the related pair $(row,column)$ is not  a vertex in $D$. The snapshot  disjoint \tpath{s,z}s $P_1 = (s, 1, a, 3,z)$ and $P_2 = (s,2,b,4,z)$ can be obtained either through the red or the green $s^*,z^*$-path.}
    \label{fig:positive_digraph}
  \end{figure}

  Now, we prove that there are $k$ \snapdisjoint\space \tpath{s,z}s in $(G,\lambda)$ if and only if there is an $s^*,z^*$-path in $D$. 
  In what follows, given a vertex $\alpha\in V(D)$, we denote by $(a_1,\ldots,a_k)$ the tuple related to $\alpha$. 
  Recall that $s^*=(s,\dots,s)$ and $z^*=(z,\dots,z)$. Suppose $P_1,\dots, P_k$ is a set of \snapdisjoint\space \tpath{s,z}s in $(G,\lambda)$. 
  For each $i \in [k]$, let $e^1_i,\dots,e^{p_i}_i$ be the sequence of edges used in $P_i$ in order of traversal and define $e^0_i = s$ and $e^{p_i + 1}_i=z$. By induction, we define a sequence of vertices of $D$, $(s^*=\alpha^1,\dots,\alpha^q=\alpha)$, that forms an $s^*,\alpha$-path for some $\alpha$ with the following property:

  \begin{itemize}
    \item[(P)] For each dimension $i \in [k]$ and each $\ell \in [q]$, the sequence of edges traversed in dimension $i$ is a subpath of $P_i$. Formally, by removing $s$ and repetitions of edges from the sequence $(a_i^1,\ldots,a_i^q)$, we obtain a subsequence of $e^1_i,\dots,e^{p_i}_i$.
  \end{itemize}

  First, we define $\alpha_1 = s^*$; clearly property (P) holds as all paths start in $s$. Now suppose that sequence $\alpha_1,\ldots,\alpha_q$ satisfying Property (P) is obtained, $q\ge 1$. Let $i \in [k]$ be such that $\lambda(a^q_i) = \min_{j \in [k]} \lambda(a^q_j)$. 
  By Property (P), observe that either $a^q_i = s$, or $a^q_i$ is an edge of $P$, or $a^q_i=z$. If the latter occurs, then we get that $P$ is an $s^*,z^*$-path in $D$, since $\lambda(z) > \lambda(e)$ for every $e\in F\setminus \{z\}$, i.e., the only way $\lambda(z)$ is minimum is if all other positions are also equal to $z$. 
  So suppose one of the other cases occurs. Note that it means that there is some edge following $a^q_i$ in $P$, say $e^\ell_i$. By definition of temporal path, $\lambda(e^\ell_i)\ge \lambda(a^q_i)$; hence $a^q_ie^\ell_i\in E(M)$. 
  Define $\alpha^{q+1}$ to be equal to $\alpha^q$ except that in position $i$ we have $e^\ell_i$ instead of $a^q_i$, and note that $(\alpha^1,\ldots, \alpha^{q+1})$ is a path in $D$ that satisfies Property (P).

  Now suppose the existence of an $s^*,z^*$-path in $D$, $(\alpha^1,\dots,\alpha^q)$. We construct a set of $k$ \snapdisjoint\space \tpath{s,z}s in $(G,\lambda)$. 
  For this, for each $i\in [k]$, let $P_i$ be a sequence of edges obtained from dimension $i$, i.e., from $(a_i^1,\ldots,a_i^q)$ by removing occurrences of $s$ and $z$, and repetitions of edges. Because each transition respects $M$, we trivially get that $P_i$ defines a \tpath{s,z} in $(G,\lambda)$. It remains to show that such paths are snapshot disjoint. 
  Suppose otherwise, and let $i,j$ be such that there are edges $e_i$ in $P_i$ and $e_j$ in $P_j$ such that $\lambda(e_i)=\lambda(e_j)=\ell$. 
  Let $\ell_i$ be the smallest index such that $a_i^{\ell_i} = e_i$, and $\ell_j$ be the smallest index such that $a_j^{\ell_j}=e_j$. 
  By the definition of $V(D)$, we have that $\ell_i\neq \ell_j$. Indeed no vertex of $D$ can contain two elements of $F$ with same value of $\lambda$, and recall that $i\neq j$ as $P_i,P_j$ are distinct paths. So, we can suppose, without loss of generality, that $\ell_i < \ell_j$. 
  Observe that this means that $\alpha^{\ell_i-1}$ differ from $\alpha^{\ell_i}$ in exactly position $i$; additionally, it means that $\lambda(e_i)\le \min_{h\in [k]}\lambda(a^{\ell_i-1}_h)$. In particular, we have that $\ell=\lambda(e_i)\le \lambda(a^{\ell_i-1}_j)$. But observe that, in a fixed dimension, the values of $\lambda$ can only increase, i.e., since $\ell_i < \ell_j$, we get $\lambda(a^{\ell_i-1}_j)\le \lambda(a^{\ell_j}_j) = \lambda(e_j) = \ell$. We get a contradiction as in this case vertex $\alpha^{\ell_i-1}$ should not be defined as it contains two elements with the same value of $\lambda$, namely $a^{\ell_i-1}_j$ and $e_i$.

  To finish the proof just recall that $|V(D)|\leq (m+2)^k$, where $m = \lvert E(G)\rvert$, and that deciding if there is a path between two vertices in $D$ can be made in time $O(|V(D)|^2)$. So, deciding if there are $k$ \snapdisjoint\space \tpath{s,z}s in $(G,\lambda)$ can be done in time $O(m^k)$.
\end{proof}


\section{Negative Results}\label{sec:negative}

In this section, we prove that the algorithms presented in Section~\ref{sec:positive} are best possible, i.e., that $\ge k$-\snappathproblem\space and $\le h$-\snapcutproblem\space are \W[1]-hard when parameterized by $k$ and $h$, respectively. 

        \begin{theorem}\label{theo:snappaths}
            $\ge k$-\snappathproblem is $\W[1]$-hard when parameterized by $k$.
        \end{theorem}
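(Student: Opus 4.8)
The plan is to reduce from a known $\W[1]$-hard problem to $\ge k$-\snappathproblem, parameterized by $k$. The natural candidate is \textsc{Multicolored Clique} (also called \textsc{Partitioned Clique}), which asks, given a graph $H$ whose vertex set is partitioned into $k$ color classes $V_1,\dots,V_k$, whether $H$ contains a clique using exactly one vertex from each class; this problem is $\W[1]$-hard parameterized by $k$. The key insight to exploit is that snapshot disjointness is purely time-based: two temporal $s,z$-paths conflict only if some timestep is used by both, regardless of which vertices or edges they traverse. So I would design a temporal graph in which each of the $k$ desired paths is forced to ``encode'' the choice of a vertex from one color class, and the time-conflict constraint is used to force that the $\binom{k}{2}$ edge-choices these paths implicitly make are mutually consistent, i.e., that they all come from a single clique.

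Concretely, here is the construction I would aim for. Assign to each vertex $v \in V_i$ a distinct ``vertex timestep'' $t_v$, and to each edge $uv \in E(H)$ with $u\in V_i$, $v\in V_j$ a distinct ``edge timestep'' $t_{uv}$, all drawn from disjoint blocks of $[\tau]$ so that the blocks can be ordered $B_1 < B_2 < \cdots$ along the lifetime to respect the non-decreasing-time requirement of temporal paths. The temporal graph $(G,\lambda)$ will have $k$ ``lanes,'' one per color class; lane $i$ is a small gadget between $s$ and $z$ consisting of parallel branches, one branch per vertex $v \in V_i$, where traversing the $v$-branch activates the timestep $t_v$ together with the timesteps $t_{vw}$ for all potential clique-partners $w$ in the other classes. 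The $k$ lanes are arranged so that a set of $k$ snapshot-disjoint $s,z$-paths must consist of exactly one path through each lane (this can be forced by a serial/``sausage'' arrangement of the lanes, or by a global capacity gadget). Snapshot disjointness then forces: (a) within a lane, at most one branch is used by the chosen path, so a single vertex $v_i\in V_i$ is selected; and (b) across lanes $i$ and $j$, the path in lane $i$ uses $t_{v_i v_j}$ only if the path in lane $j$ uses a \emph{different} encoding of the same edge — here I would use the standard trick of splitting each edge $uv$ into two timesteps $t_{uv}^{\to}$ and $t_{uv}^{\leftarrow}$, having lane $i$ claim $t_{uv}^{\to}$ when it selects $u$ and ``blocking'' (forcing the use of) $t_{uv}^{\leftarrow}$ unless lane $j$ selects $v$, so that a conflict is avoided precisely when both lanes agree on the edge. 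The equivalence ``$H$ has a multicolored $k$-clique iff $(G,\lambda)$ has $k$ snapshot-disjoint temporal $s,z$-paths'' then follows by checking both directions: a clique yields $k$ pairwise non-conflicting lane-paths, and conversely any $k$ snapshot-disjoint paths select one vertex per class whose pairwise edges must all be present.

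I would carry out the steps in this order: (1) fix the source problem (\textsc{Multicolored Clique}) and recall its $\W[1]$-hardness; (2) describe the timestep blocks and their ordering along $[\tau]$; (3) build the lane gadget for a single color class and verify that a path through a lane corresponds exactly to a vertex choice, listing the timesteps it activates; (4) wire the $k$ lanes together (serially, or with a shared bottleneck) so that $k$ snapshot-disjoint $s,z$-paths must hit all lanes; (5) implement the edge-consistency mechanism via the two-orientation timestep trick so that cross-lane conflicts vanish exactly on agreement; (6) prove the forward direction (clique $\Rightarrow$ $k$ disjoint paths) by explicit construction; (7) prove the backward direction (disjoint paths $\Rightarrow$ clique) by arguing that the selected vertices form a clique; (8) note that $k$ in the constructed instance equals (a function of) $k$ in the source instance, so the reduction is parameter-preserving, and that it is polynomial in $|H|$.

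The main obstacle I expect is step (5): getting the time-conflict pattern to encode edge-consistency cleanly. Because snapshot disjointness only ever gives us a ``these two timesteps must not both be used'' constraint (never a positive ``these must both be used'' constraint), one must be careful that selecting vertex $u$ in lane $i$ genuinely \emph{forces} the complementary token in lane $j$ rather than merely permitting it; the usual fix is to make the complementary branch in lane $j$ mandatory (the only way through that segment) unless the ``matching'' branch is taken, so that a mismatch is unavoidably detected as a shared timestep. A secondary subtlety is ensuring the non-decreasing-time condition is respected throughout all lanes simultaneously despite the interleaving of vertex- and edge-timesteps; this is handled by the block ordering in step (2), possibly duplicating timesteps across well-separated blocks if a lane needs to revisit an ``edge value'' after visiting a later lane. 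Everything else is routine bookkeeping.
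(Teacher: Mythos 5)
There is a genuine gap, and you have in fact put your finger on it yourself: step (5) does not work as described, and it is the heart of your reduction. Snapshot disjointness can only ever express the negative constraint ``these two paths must not share a timestep''; it cannot force a path to use a particular timestep. Your proposed fix --- making the complementary branch in lane $j$ ``mandatory unless the matching branch is taken'' --- is exactly such a positive forcing constraint, and you give no mechanism that realizes it. If you chase the idea to its logical end, the only thing conflicts can encode is ``the vertex chosen in lane $i$ and the vertex chosen in lane $j$ are incompatible'', i.e.\ you would place a shared timestep on the $u$-branch and the $v$-branch precisely when $uv$ is a \emph{non}-edge (or when $u,v$ lie in the same class); at that point you are really reducing from independent set in the complement graph, and the lane/orientation machinery is superfluous. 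Two further technical problems: a serial (``sausage'') arrangement of the lanes makes every $s,z$-path traverse \emph{all} lanes rather than giving one path per lane, so it does not yield the correspondence you want; and ``duplicating timesteps across well-separated blocks'' is incoherent --- a duplicated label is a different timestep and creates no conflict, while a genuinely shared label cannot live in two disjoint blocks.

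The paper sidesteps all of this by reducing from $\geq k$-\textsc{Independent Set}, whose constraint is purely negative and therefore matches snapshot disjointness exactly: for each vertex $u$ of the source graph $G$ it builds an $s,z$-path $Q_u$ on $d(u)$ edges, fixes a global ordering $e_1,\dots,e_m$ of $E(G)$, and labels the edges of $Q_u$ with the indices of the edges incident to $u$, in increasing order so that $Q_u$ is a temporal path. Then $\lambda(E(Q_u))\cap\lambda(E(Q_v))\neq\emptyset$ if and only if $uv\in E(G)$, so $k$ snapshot disjoint temporal $s,z$-paths exist if and only if $G$ has an independent set of size $k$. To salvage your write-up, replace \textsc{Multicolored Clique} by \textsc{Independent Set} (or, equivalently, put conflict timesteps on non-edges); as it stands, the central consistency gadget is not constructed, and the mechanism it gestures at cannot be realized using only conflict constraints.
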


        \begin{proof}
            We make a parameterized reduction from  \textsc{$\geq k$-Independent set} when parameterized by $k$. The input of such problem is a simple graph $G$ and an integer $k$, and the question is whether $G$ has an independent set of size at least $k$. This is known to be $\W[1]$-hard (see e.g.~\cite{param.book}).

            So consider an instance $G,k$  of \textsc{$\geq k$-Independent Set} and let $|V(G)|=n$. Observe Figure~\ref{fig:reducao_SnapPath} to follow the construction. First, add to $G'$ vertices $s$ and $z$. Then, for each $u\in V(G)$, add to $G'$ an $s,z$-path on $d(u)$ edges; denote such path by $Q_u$. Now, consider any ordering $e_1,\ldots, e_m$ of $E(G)$, and denote the edges incident to a vertex $u \in V(G)$ by $\delta(u)$. We can define $\lambda \colon E(G') \to \N$ in a way that each $Q_u$ is a temporal $s,z$-path using the orders of the edges in $\delta(u)$. Formally, for each $u\in V(G)$, let $\delta(u) = \{e_{i_1},\ldots,e_{i_q}\}$ with $i_1 < \ldots < i_q$, and define $\lambda(E(Q_u))$ to be equal to $\{i_1,\ldots,i_q\}$ in a way that $Q_u$ is a temporal path.

\begin{figure}[h]
    \centering
    \includegraphics[width = 12cm]{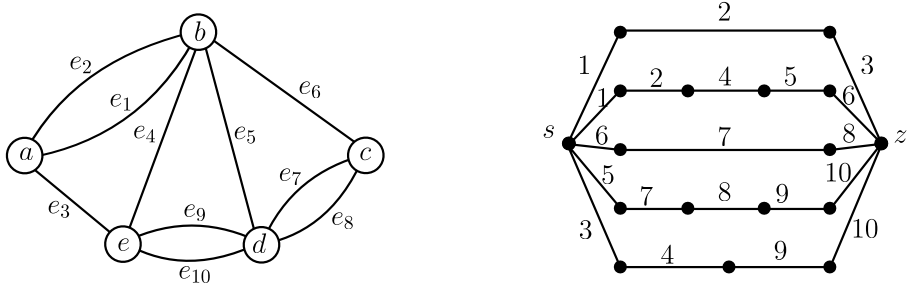}
    \caption{To the left, graph $G$, and to the right, the constructed temporal graph $(G', \lambda)$. In $G'$, paths $P_a,P_b,P_c,P_d,P_e$ are depicted from top to bottom, in this order.}
    \label{fig:reducao_SnapPath}
  \end{figure}

  Because of space constraints, the proof of correctedness is presented in Appendix~\ref{app:snappaths}.
  \end{proof}


        Now, we prove the analogous result for the cut problem.

        \begin{theorem}\label{thm:snapcut_W1hard}
            $\le h$-\snapcutproblem\space is $\W[1]$-hard when parameterized by $h$.
        \end{theorem}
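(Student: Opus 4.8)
The plan is to mirror the reduction used for the paths problem (Theorem~\ref{theo:snappaths}), but starting from a $\W[1]$-hard covering/hitting problem rather than from \textsc{Independent Set}, since a snapshot cut is a set of timesteps that ``hits'' every temporal $s,z$-path. The natural source is \textsc{$\le h$-Multicolored Clique} or, more directly, \textsc{$\le h$-Set Cover} parameterized by the number of sets $h$ (equivalently \textsc{Hitting Set} parameterized by solution size), which is $\W[1]$-hard (see e.g.~\cite{param.book}). Given an instance $(\mathcal{U},\mathcal{F},h)$ of \textsc{Set Cover} with universe $\mathcal{U}=\{u_1,\dots,u_n\}$ and family $\mathcal{F}=\{F_1,\dots,F_m\}$, I would build a temporal graph $(G',\lambda)$ with vertices $s,z$, one timestep $t_j$ per set $F_j\in\mathcal{F}$, and, for each element $u_i\in\mathcal{U}$, a parallel temporal $s,z$-path $P_i$ whose edges are active exactly on the timesteps $\{t_j : u_i\in F_j\}$ (in increasing order, so that $P_i$ is a valid temporal path). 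Then a set $S\subseteq[\tau]$ is a \snapcut{s,z} in $(G',\lambda)$ if and only if for every $i$ the path $P_i$ uses a timestep in $S$, i.e.\ if and only if $\{F_j : t_j\in S\}$ covers $\mathcal{U}$. Setting the parameter to $h$ gives the parameterized reduction.

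The steps, in order: (1) describe the construction of $(G',\lambda)$ precisely, including the routine observation (as in the excerpt) that parallel edges active at distinct times let us realize each $P_i$ as an internally vertex-disjoint $s,z$-path on $|F\text{'s containing }u_i|$ edges; (2) argue that every temporal $s,z$-path in $(G',\lambda)$ lies entirely inside a single $P_i$ (this is where the gadget must be designed so the $P_i$'s only share $s$ and $z$, so no temporal walk can switch between them), hence the temporal $s,z$-paths are, up to traversal, exactly $P_1,\dots,P_n$; (3) prove the equivalence ``$(G',\lambda)$ has a \snapcut{s,z} of size $\le h$ $\iff$ $(\mathcal{U},\mathcal{F})$ has a set cover of size $\le h$'' via the timestep$\leftrightarrow$set correspondence, using step~(2) for the forward direction and the explicit description of the paths for the backward direction; (4) note the construction is polynomial and the parameter is preserved, concluding $\W[1]$-hardness. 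If a cleaner write-up is desired one can reuse the very figure/construction of Theorem~\ref{theo:snappaths}: there, path $Q_u$ corresponds to vertex $u$ and its timesteps correspond to incident edges, so the cut version is literally \textsc{Vertex Cover}-parameterized-by-solution-size on the line graph — but \textsc{Vertex Cover} is FPT, so one must instead take $Q_u$ to correspond to a \emph{set} and its timesteps to the \emph{elements} it contains, which is exactly the \textsc{Set Cover} formulation above.

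The main obstacle I anticipate is step~(2): ensuring that the only temporal $s,z$-paths are the prescribed $P_i$'s, so that ``hitting all paths'' really does reduce to ``hitting all $P_i$'s.'' Because the $P_i$'s are internally vertex-disjoint and share only the endpoints $s$ and $z$, any temporal $s,z$-walk that leaves $s$ along $P_i$ cannot reach another $P_{i'}$ without revisiting $s$, so as a \emph{path} it must stay within $P_i$; the one subtlety is checking that a temporal path could not take a ``shortcut'' using only a proper prefix of one $P_i$ and a suffix of another through $z$ — but since $z$ is an endpoint this cannot happen either. A secondary, purely bookkeeping concern is choosing $\lambda$ on each $P_i$ so that it is simultaneously a valid \emph{temporal} path (edge-labels nondecreasing along the path) and realizes the exact label set $\{t_j : u_i\in F_j\}$; ordering the relevant timesteps increasingly along $P_i$ handles this, exactly as in the proof of Theorem~\ref{theo:snappaths}.
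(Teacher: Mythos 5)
Your reduction is correct, but it takes a genuinely different (and in fact stronger) route than the paper's. The paper reduces from \textsc{Multicolored $k$-Clique}: it builds a gadget $F_{i,j}$ for each pair of colour classes, assigns each pair its own time window $\Delta_{i,j}$, encodes edges of $G$ as vertices $w^{i,j}_\ell$ with two extra edges to $z$ labelled by timesteps associated with the endpoints, and asks for a cut of size $\binom{k}{2}+k$; the correctness argument needs several structural facts and two exchange claims to force a minimum cut to pick one ``open edge'' per pair plus $k$ vertex-timesteps forming a clique. Your construction sidesteps all of this: with one internally-vertex-disjoint \tpath{s,z} $P_i$ per universe element, whose label set is exactly $\{t_j : u_i\in F_j\}$ ordered increasingly, every internal vertex has degree~$2$, so the temporal $s,z$-paths are precisely the $P_i$'s, and a \snapcut{s,z} is literally a hitting set of the sets $\lambda(E(P_i))$, i.e.\ a set cover; your anticipated obstacle (step~(2)) is therefore a non-issue. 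What your approach buys is twofold: the proof is a few lines instead of two pages, and since \textsc{Set Cover}/\textsc{Hitting Set} parameterized by solution size is $\W[2]$-complete, the same parameter-preserving reduction actually establishes $\W[2]$-hardness, which is stronger than the stated $\W[1]$-hardness (and still consistent with the $\XP$ upper bound); you should state the source problem's $\W[2]$-hardness explicitly rather than merely ``$\W[1]$-hard.'' Two cosmetic points: handle the degenerate element contained in no set (trivial no-instance), and if you want to respect the paper's convention that each snapshot is a simple graph, subdivide the single-edge $P_i$'s so that no two parallel $s,z$-edges share a label. Finally, your aside about reusing the Theorem~\ref{theo:snappaths} construction is right in spirit but slightly misidentified: hitting all the paths $Q_u$ there amounts to a minimum \emph{edge cover} of $G$ (one chosen edge incident to each vertex), not a vertex cover of the line graph; either way it is polynomial-time solvable, which is exactly why a $\W[2]$-hard covering problem is the right source.
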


        \begin{proof}
            We make a reduction from \textsc{Multicolored $k$-Clique}, when parameterized by $k$, known to be $\W[1]$-hard~\cite{param.book}. Such problem has as input a simple graph $G$, an integer $k$, and a partition of $V(G)$ into $k$ independent sets (alternatively, a  proper $k$-coloring), and the question is whether $G$ has a (multicolored) clique of size $k$. So let $G$ be a graph and $\{X_1,\dots,X_k\}$ be a proper $k$-coloring of $G$. By adding artificial vertices and edges if necessary, we can suppose that the number of edges between $X_i$ and $X_j$ is equal to a value $m$, for every pair $i,j \in [k]$. So, for $i,j\in [k]$, $i\neq j$, denote the set of such edges by $E_{i,j} = \{e^{i,j}_1,\dots,e^{i,j}_m\}$. We make this assumption in order to make presentation simpler. 
            
            Now, for each $i,j \in [k]$, $i\neq j$, we construct a gadget denoted by $F_{i,j}$. Observe Figure~\ref{fig:Fij} to follow the construction. 
            First add to $F_{i,j}$ the set of vertices $V_{i,j} = \{v^{i,j}_0,\dots,v^{i,j}_{2m}\}$, making the first $m+1$ of them form a path of multiplicity $m$, and the latter $m+1$ form a path of multiplicity~1. 
            Formally, for each $\ell\in \{0,\ldots,m-1\}$, add $m$ edges with endpoints $v^{i,j}_\ell v^{i,j}_{\ell+1}$. Also, for each $\ell\in \{m,\ldots,2m-1\}$, add 1 edge with endpoints $v^{i,j}_{\ell}v^{i,j}_{\ell+1}$. 
            Now, for each $\ell\in [m]$, add vertex $w^{i,j}_\ell$ and join such vertex with $v^{i,j}_\ell$ by a path with $m-1$ edges and denote such path by $P^{i,j}_\ell$. We say that vertex $w^{i,j}_\ell$ of our gadget is \emph{associated with edge $e^{i,j}_\ell$} of $E_{i,j}$. The timefunction is defined only later.

            \begin{figure}
                \centering
                \includegraphics[scale=0.7]{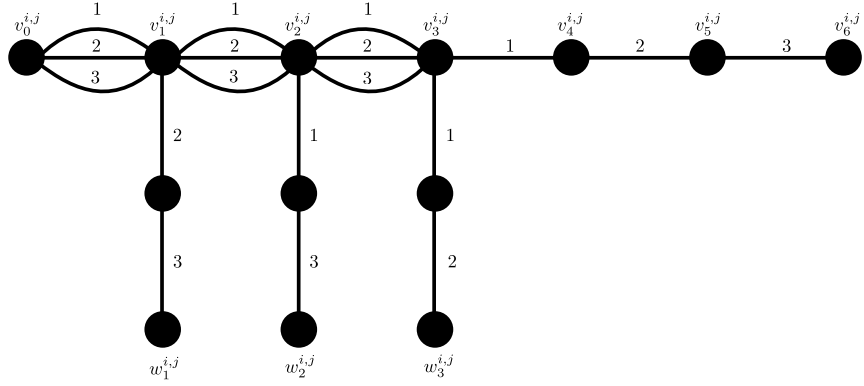}
                \caption{A representation of $F_{i,j}$ with labels of $\lambda$ where $m=3$ and $\Delta_{i,j}=\{1,2,3\}$.}
                \label{fig:Fij}
            \end{figure}

            Now, we finish the construction of our temporal graph. For this, take the union of all graphs $F_{i,j}$ and identify all vertices $v^{i,j}_{0}$, calling the obtained vertex $s$, and identify all vertices $v^{i,j}_{2m}$, calling the obtained vertex $z$. 
            Also, for each $i,j \in [k]$, $i\neq j$, and $\ell \in [m]$, we add two edges between $w^{i,j}_\ell$ and $z$. Denote by $G'$ the obtained graph, and by $W$ the set $\{w^{i,j}_\ell\mid i,j\in[k], i\neq j, \ell\in [m]\}$. Observe that $G'$ contains $O(k^2\cdot m)$ vertices and edges.

            Now we define $\lambda$. The idea is that each $F_{i,j}$ will be active during its own dedicated time window. Formally, 
            define $\Delta_{i,j}=[f_{i,j}+1,f_{i,j}+m]$, for each pair $i,j \in [k]$, $i\neq j$, in a way that $\Delta_{i,j}\cap \Delta_{i',j'}=\emptyset$ whenever $\{i,j\}\neq \{i',j'\}$. 
            Now, consider $i,j \in [k]$ with $i\neq j$. For each $\ell \in \{0,\ldots,m-1\}$, we define $\lambda$ in a way that every value in $\Delta_{i,j}$ appears in some edge with endpoints $v^{i,j}_{\ell}v^{i,j}_{\ell+1}$. 
            Also, for each $\ell\in [m]$, we let $\lambda(v^{i,j}_{m-1+\ell} v^{i,j}_{m+\ell})=\{f_i+\ell\}$, and
            we define $\lambda(E(P^{i,j}_\ell))$ to be equal to $\Delta_{i,j}\setminus \{f_{i,j}+\ell \}$ and in a way that $P^{i,j}_\ell$ is a temporal $v^{i,j}_\ell,w^{i,j}_\ell$-path. 
            Finally, the only edges that remain unlabelled are the edges between $z$ and vertices of type $w$. For such edges, we reserve a time window of size $n = \lvert V(G)\rvert$, that we denote by  $\Delta_V$, where any timestep in such set is greater that any timestep we used to define $\lambda$ so far. Moreover, we associate each vertex $v \in V(G)$ with a timestep $t_v \in \Delta_V$. Let $w^{i,j}_{\ell} \in W$ and recall that such vertex is associated with $e^{i,j}_{\ell}\in E(G)$. Suppose $e^{i,j}_\ell$ have endpoints $xy$, and let the two edges of $G'$ with endpoints $w^{i,j}_{\ell}z$ be active in timesteps $\{t_x,t_y\}$.

            Now, we prove that $G$ has a clique of size $k$ if and only if $(G',\lambda)$ has a \snapcut{s,z} of size at most ${k \choose 2}+k$.
            Consider first a clique $C$ of $G$ of size $k$, and let $\{e^{i_1,j_1}_{\ell_1},\dots, e^{i_a,j_a}_{\ell_a}\}$ be the set of edges of $G$ between vertices of $C$. Notice that, because $C$ has a vertex from each part, we get that $a = {k \choose 2}$. Define $S = \{f_{i_b,j_b} + \ell_b \mid b\in \{1,\ldots,a\}\}\cup \{t_v \mid v \in C\}$. We prove that $S$ is a \snapcut{s,z}.
            By contradiction, suppose that $P$ is a temporal $s,z$-path not passing by $S$, i.e., such that $\lambda(E(P)) \cap S=\emptyset$. 
            Since $a = {k\choose 2}$ and all edges incident to $s$ are active in timesteps $\bigcup_{i,j\in[k],i\neq j}\Delta_{i,j}$, we can define
            $b\in [a]$ to be the index related to the first edge in $P$, i.e., $P$ starts in an edge of $F_{i_b,j_b}$, say the one active in timestep $f_{i_b,j_b} +\ell_b$. Observe that the value $f_{i_b,j_b} +\ell_b$ is within the temporal $s,z$-path contained in $F_{i_b,j_b}$, and that it also separates $s$ and $w^{i_b,j_b}_{\ell}$ for every $\ell \in [m]\setminus \{\ell_b\}$. Hence, $P$ must start with the temporal $s,w^{i_b,j_b}_{\ell_b}$-path contained in $F_{i_b,j_b}$. However, as $e^{i_b,j_b}_{\ell_b}$ is incident to vertices of the clique, say $x$ and $y$, then we have that $P$ uses timestep $t_x$ or $t_y$, a contradiction as $\{t_x,t_y\}\subseteq S$. 

            Now, suppose that $S$ is a minimum \snapcut{s,z} in $(G',\lambda)$ and that it has size at most ${k \choose 2}+k$. Let $V_S = \{x\in V(G)\mid t_x\in S\}$. We prove that $V_S$ is a clique of $G$ of size $k$. 
            Denote by $O$ the set of pairs $\{(i,j)\mid i,j\in [k], i<j\}$. We say that $(i,j)\in O$ is \emph{open} if $\Delta_{i,j}\cap S = \{f_{i,j}+\ell\}$ for some $\ell\in [m]$, and we say that $e^{i,j}_\ell$ is the \emph{open edge of $(i,j)$}. 
            The following simple facts will be useful:
            
            \begin{enumerate}
              \item\label{fact1} For every $i,j\in[k]$, $i\neq j$, we have $\Delta_{i,j}\cap S\neq \emptyset$: this is due to the fact that there is a \tpath{s,z} using only timesteps in $\Delta_{i,j}$; 
              
              \item\label{fact2} If $\ell\in[m]$ is such that $\Delta_{i,j}\cap S = \{f_{i,j}+\ell\}$, then $\{x,y\}\subseteq V_S$, where $xy$ are the endpoints of $e^{i,j}_\ell$: this is because there exists a \tpath{s,w^{i,j}_\ell} not using any timestep in $S$, and hence such path can be extended to a \tpath{s,z} by using an edge with endpoints $w^{i,j}_\ell$ either in timestep $t_x$ or in timestep $t_y$;
              
              \item\label{fact3} For every $i,j\in[k]$, $i\neq j$, we have $\lvert\Delta_{i,j}\cap S\rvert \le 2$: it suffices to see that any two timesteps in $\Delta_{i,j}$ intersects all temporal paths between $s$ and any vertex in $\{w^{i,j}_\ell\mid \ell\in [m]\}\cup \{z\}$;
              
              \item\label{fact4} If $x\in V_S$, then $x$ is incident to some open edge: indeed, if $x$ is not incident to any open edge, then $w^{i,j}_\ell$ is separated from $s$ by $S\setminus \{t_x\}$ for every edge $e^{i,j}_\ell$ incident in $x$, and since timestep $t_x$ contains only edges incident to some such $w^{i,j}_\ell$, it follows that $S\setminus \{t_x\}$ is also a \snapcut{s,z}, contradicting the minimality of $S$. 
          \end{enumerate}

          By Fact~\ref{fact3}, if $(i,j)$ is not open, then $\Delta_{i,j}\cap S = \{f_{i,j}+\ell_1, f_{i,j}+\ell_2\}$ for some pair of values $\ell_1,\ell_2\in [m]$. In such case, we say that edges $e^{i,j}_{\ell_1}$ and $e^{i,j}_{\ell_2}$ are \emph{chosen} for $(i,j)$. We show how to modify $S$ in order to decrease the number of chosen edges. Because of space constraints, the proof of the following claims are presented in Appendix~\ref{app:claims}.

          \begin{claim}\label{claim1}
          If $(i,j)$ is not open, then we can suppose that $V_S\cap (X_i\cup X_j) = \emptyset$.
          \end{claim}

          \begin{claim}\label{claim2}
           We can suppose that every pair is open.
          \end{claim}

          Finally, observe that the set of open edges, $E^*$, contains exactly ${k\choose 2}$ edges, by definition of open edge and by Claim~\ref{claim2}. We then get that $\lvert S\rvert = \lvert E^*\rvert + \lvert V_S\rvert = {k\choose 2} + \lvert V_S\rvert$. It follows that $\lvert V_S\rvert \le k$. Additionally, by Fact~\ref{fact2} we know that $E^*$ forms a subgraph of $G$ with vertex set $V_S$. Because $G$ is a simple graph, $E^*$ contains ${k\choose 2}$ edges, and $V_S$ contains at most $k$ vertices, the only way this can be possible is if $V_S$ contains exactly $k$ pairwise adjacent vertices, i.e., $V_S$ is a clique of size $k$, as we wanted to prove.
            
        \end{proof}


\section{Characterization and recognition of Mengerian graphs}\label{sec:menger}
 Let $\mathcal{P}$ be a set of \snapdisjoint\space\tpath{s,z}s and $S$ a \snapcut{s,z}. By definition, for each path  $P \in \mathcal{P}$, there is an edge in $P$ active at a timestep $\alpha_P$, for some $\alpha_P\in S$. As the paths in $\mathcal{P}$ are \snapdisjoint\space, we have that $\alpha_{P} \neq \alpha_Q$ for every $Q \in \mathcal{P}$ different from $P$. Therefore $|\mathcal{P}|\leq |S|$ and the inequality \maxsnap{G,\lambda}{s,z} $\leq$ \minsnap{G,\lambda}{s,z} follows. In Proposition~\ref{prop:tstrict} we show that this inequality can be strict, and later on we prove Theorem~\ref{thm:Menger}. 
 In our characterization, we have~$5$ graphs as forbidden structures, $M_1,M_2,M_3,M_4,M_5$, that are represented in Figure~\ref{fig:todos}. Let $\mathcal{M}$ be the set of such graphs.

In Figure~\ref{fig:tbadlabeling}, we present timefunctions for the graphs in $\mathcal{M}$ that turn the inequality \maxsnap{G,\lambda}{s,z} $\leq$ \minsnap{G,\lambda}{s,z} strict. This is formally stated in the next proposition and it is proved in Appendix~\ref{app:tstrict}. 

	\begin{figure}[h]
		\centering
			\includegraphics[scale=0.65]{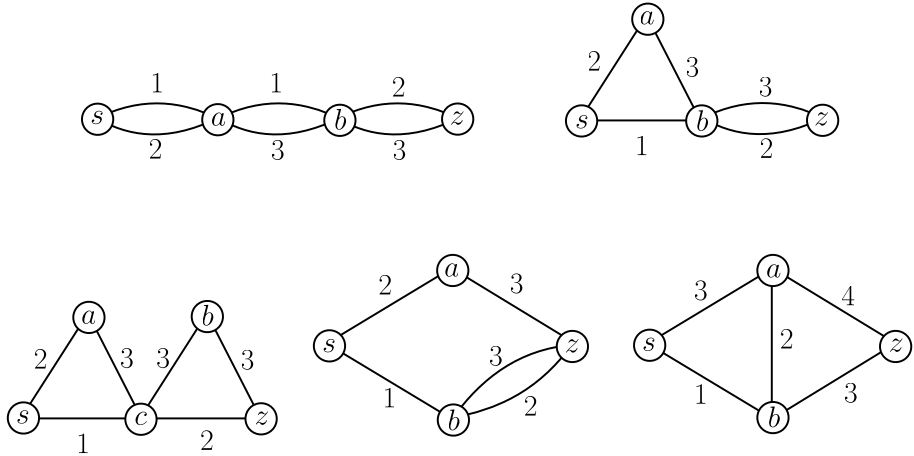}
		\caption{Graphs in the set $\mathcal{M}$ with timefunctions such that \maxsnap{G\lambda}{s,z} $<$ \minsnap{G,\lambda}{s,z}.}
		\label{fig:tbadlabeling}
	\end{figure}
	
\begin{proposition}\label{prop:tstrict}
    Let $(G,\lambda)$ be one of the temporal graphs depicted in Figure~\ref{fig:tbadlabeling}. Then \maxsnap{G,\lambda}{s,z} $<$ \minsnap{G,\lambda}{s,z}.
\end{proposition}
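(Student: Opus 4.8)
The plan is to verify the strict inequality separately for each of the five fixed temporal graphs $(M_i,\lambda)$, $i\in\{1,\dots,5\}$, shown in Figure~\ref{fig:tbadlabeling}. Since \maxsnap{G,\lambda}{s,z} $\le$ \minsnap{G,\lambda}{s,z} holds for every temporal graph (this is the inequality established just before the proposition), for each $M_i$ it suffices to produce an integer $r_i$ with \maxsnap{M_i,\lambda}{s,z} $\le r_i$ and \minsnap{M_i,\lambda}{s,z} $\ge r_i+1$; in fact one will have $r_i$ equal to \maxsnap{M_i,\lambda}{s,z} and \minsnap{M_i,\lambda}{s,z} $= r_i+1$, but only the displayed bounds are needed. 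Each $(M_i,\lambda)$ is a small, fully specified temporal graph, so establishing both bounds is a purely combinatorial finite inspection. (Should some $M_i$ be an m-subdivision of another $M_j$ with matching labelings, that case could alternatively be obtained via Proposition~\ref{prop:nonmeng}, but I would not rely on this and would carry out all five cases directly.)

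For the upper bound \maxsnap{M_i,\lambda}{s,z} $\le r_i$, I would enumerate all \tpath{s,z}s of $(M_i,\lambda)$, identifying any two of them that traverse edges active at the same set of timesteps, since such paths are interchangeable with respect to snapshot disjointness; only a bounded number of path types then remains. Build the auxiliary graph $C_i$ on these types, making two types adjacent exactly when the corresponding paths use disjoint sets of timesteps. Then \maxsnap{M_i,\lambda}{s,z} equals the clique number of $C_i$, which can be read off by inspection and equals $r_i$. For concreteness, in the example already highlighted in Figure~\ref{fig:bigger_cut} every \tpath{s,z} traverses the middle multiedge and any two of them share a timestep, so $r_i=1$ there; for the remaining graphs the extra vertices and parallel edges only add more timesteps that every \tpath{s,z} is forced to spend, which again forces a collision within any family of paths exceeding the claimed size.

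For the lower bound \minsnap{M_i,\lambda}{s,z} $\ge r_i+1$, it is enough to check that no $S\subseteq[\tau(\lambda)]$ with $|S|=r_i$ is a \snapcut{s,z}, i.e., to exhibit, for every such $S$, a \tpath{s,z} all of whose edge labels lie outside $S$. This is again a finite check; for the example of Figure~\ref{fig:bigger_cut} it is witnessed exactly by the three temporal paths listed right after that figure, which together show that no single timestep separates $s$ from $z$.

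The only obstacle I foresee is bookkeeping: one must ensure the enumeration of \tpath{s,z}s is exhaustive --- in particular not overlooking non-strict paths that reuse a timestep or ``wait'' at one --- and that in each $M_i$ every snapshot is simple, so that temporal paths may be recorded as vertex--timestep sequences as in Section~\ref{sec:defs}. Beyond this careful finite case analysis there is no conceptual difficulty.
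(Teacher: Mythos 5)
Your overall strategy is the same as the paper's: Proposition~\ref{prop:tstrict} is proved there by precisely the kind of finite inspection you outline, namely showing \minsnap{G,\lambda}{s,z}~$>1$ by exhibiting, for each timestep, a \tpath{s,z} avoiding it, and showing \maxsnap{G,\lambda}{s,z}~$\le 1$ by ruling out two \snapdisjoint \tpath{s,z}s. (The paper does the latter uniformly for $M_1,\dots,M_4$: in each of these labelings only two edges leave $s$, active at times~$1$ and~$2$, so one of two \snapdisjoint paths must depart at time~$2$ and can then only reach $z$ through an edge at time~$3$; the other path may then use neither time~$2$ nor~$3$, yet every edge incident to $z$ carries one of those labels. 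For $M_5$ the same argument runs with times~$3$ and~$4$.) Your auxiliary graph $C_i$ and its clique number are a correct reformulation of the upper bound, and your appeal to the general inequality \maxsnap{G,\lambda}{s,z}~$\le$~\minsnap{G,\lambda}{s,z} is fine.

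The gap is that the inspection is never actually carried out, and for this proposition the inspection \emph{is} the entire content. The values $r_i$ are never determined (they all equal $1$), the path types and the graphs $C_i$ are never listed, and the one concrete case you treat is the graph of Figure~\ref{fig:bigger_cut}, not one of the five labeled graphs of Figure~\ref{fig:tbadlabeling} that the proposition is about. Most importantly, the sentence asserting that for the remaining graphs ``the extra vertices and parallel edges only add more timesteps that every \tpath{s,z} is forced to spend, which again forces a collision'' is exactly the claim to be verified for $M_2,\dots,M_5$, not an argument for it; a priori, extra parallel edges could just as well create new \snapdisjoint paths. As written this is a sound and complete \emph{plan}, but not a proof: to close it you must fix the five timefunctions of Figure~\ref{fig:tbadlabeling} and, for each, (i) exhibit the witnesses showing that no single timestep is a \snapcut{s,z}, and (ii) give an argument (for instance the one above based on the labels of the edges incident to $s$ and to $z$) showing that no two \tpath{s,z}s are \snapdisjoint.
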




We now prove that the equality between the parameters always holds if $\lambda$ is injective.

\begin{proposition}\label{prop:lambdainjective}
    Let $(G,\lambda)$ be a temporal graph such that $\lambda$ is injective (i.e. $|\lambda^{-1}(\alpha)|\leq 1$ for every $\alpha \in \mathbb{N}$). Then \maxsnap{G,\lambda}{s,z} = \minsnap{G,\lambda}{s,z} for every $s,z \in V(G)$. Moreover, we can compute such value in polynomial time.
\end{proposition}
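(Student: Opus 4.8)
The plan is to reduce the injective case to the edge-disjoint version of Menger's Theorem for non-strict temporal paths, which is known to hold and to be polynomial-time solvable by Berman~\cite{B.96}. The key observation is that, when $\lambda$ is injective, every snapshot of $(G,\lambda)$ contains at most one edge, so the notions of snapshot disjointness and of \snapcut{s,z} collapse onto edge disjointness and temporal edge $s,z$-cut.

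First I would show that two \tpath{s,z}s $P$ and $Q$ are \snapdisjoint\space if and only if $E(P)\cap E(Q)=\emptyset$: since $\lambda$ is injective, its restriction to $E(P)\cup E(Q)$ is injective, hence $\lambda(E(P))\cap\lambda(E(Q))=\lambda(E(P)\cap E(Q))$, which is empty exactly when $E(P)\cap E(Q)$ is. Consequently \maxsnap{G,\lambda}{s,z} equals the maximum number of pairwise edge-disjoint \tpath{s,z}s in $(G,\lambda)$. Next I would observe that $S\subseteq[\tau]$ is a \snapcut{s,z} if and only if the edge set $\lambda^{-1}(S)$ meets every \tpath{s,z}, i.e.\ $\lambda^{-1}(S)$ is a temporal edge $s,z$-cut; moreover $|\lambda^{-1}(S)|\le|S|$, and conversely for any temporal edge $s,z$-cut $C$ the set $\lambda(C)$ is a \snapcut{s,z} with $|\lambda(C)|=|C|$ by injectivity. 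These two inequalities give that \minsnap{G,\lambda}{s,z} equals the minimum size of a temporal edge $s,z$-cut.

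Finally, I would invoke Berman's result~\cite{B.96}: in the non-strict setting the maximum number of edge-disjoint \tpath{s,z}s equals the minimum size of a temporal edge $s,z$-cut, and this common value can be computed in polynomial time. Combining this with the two translations above yields \maxsnap{G,\lambda}{s,z}$=$\minsnap{G,\lambda}{s,z} for every $s,z\in V(G)$, together with the claimed polynomial-time bound.

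Since every step is a direct translation of definitions, I do not expect a substantial obstacle. The only points requiring care are verifying that the correspondence between snapshot cuts and temporal edge cuts is size-preserving at the optimum (in particular that a minimum \snapcut{s,z} uses no empty timestep), and making sure that the version of Menger's Theorem being invoked is precisely the non-strict edge-disjoint one, which matches the notion of temporal path (non-decreasing labels) used throughout the paper.
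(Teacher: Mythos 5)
Your proposal is correct and follows essentially the same route as the paper's proof: both use injectivity of $\lambda$ to show that snapshot disjointness collapses to edge disjointness and that snapshot $s,z$-cuts correspond (size-preservingly) to temporal edge $s,z$-cuts, and then invoke Berman's edge-disjoint Menger result for non-strict temporal paths to obtain both the equality and the polynomial-time computability.
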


\begin{proof}
     If $P$ and $Q$ are \snapdisjoint\space\tpath{s,z}s, then, for $e \in E(P)$ and $f \in E(Q)$, we have that $\lambda(e)\neq \lambda(f)$, therefore $e\neq f$.  On other hand, if $P$ and $Q$ are such that $E(P)\cap E(Q)= \emptyset$, then we have that $\lambda(E(P))\cap \lambda(E(Q)) = \emptyset$. Thus, $P$ and $Q$ are \snapdisjoint\space if and only if they are edge disjoint. Therefore, the maximum size of edge disjoint \tpath{s,z}s is equal \maxsnap{G,\lambda}{s,z}. Moreover notice that $S \subseteq E(G)$ is a set such that every \tpath{s,z} uses an edge of $S$, then every \tpath{s,z} uses an edge active at timestep $\{\lambda(e) \mid e \in S\}$. If $S^T \subseteq [\tau]$ is such that every \tpath{s,z} uses an edge active at timestep $\alpha \in S^T$, then it uses the only edge in $\lambda^{-1}(\alpha)$. Therefore, the size of a minimum set of edges such that every \tpath{s,z} intersects such set is equal to \minsnap{G,\lambda}{s,z}. Using a result proved in~\cite{B.96} we conclude that \maxsnap{G,\lambda}{s,z}$=$\minsnap{G,\lambda}{s,z}, and that both parameters can be found in polynomial time.
\end{proof}

\begin{proof}[Proof of characterization]
We first prove that if $G$ has $M$ as a m-topological minor, for some $M\in {\cal M}$, then $G$ is not Mengerian. In Proposition~\ref{prop:tstrict}, we already get that each $H\in {\cal H}$ is not Mengerian. To finish necessity, we need to prove that adding vertices, adding edges and m-subdividing edges in $H$ does not lead to a Mengerian graph.  Propositions~\ref{prop:nonmeng} and~\ref{prop:subgraphClosed} do this work.

    Now, suppose by contradiction that $G$ does not contain any $M$ in $\mathcal{M}$ as m-topological minor and that $G$ is non-Mengerian. Hence, there exists $\lambda \colon E(G) \to \N$ and $s,z \in V(G)$ such that \maxsnap{G,\lambda}{s,z}$<$ \minsnap{G,\lambda}{s,z}. Suppose, without loss of generality, that among all such graphs and timefunctions we choose those that minimize $|V(G)|+|E(G)| + \tau(\lambda)$. 
    We first show that there are no edges connecting $s$ and $z$. Suppose otherwise, that there is an edge $e$ with endpoints and let $\lambda(e) = \alpha$. Let $G'$ be the graph obtained from removing all edges in $\lambda^{-1}(\alpha)$ and $\lambda'$, the restriction of $\lambda$ to $E(G)\setminus \lambda^{-1}(\alpha)$. Because $(s,e,z)$ is a \tpath{s,z}, observe that \maxsnap{G',\lambda'}{s,z}$=$\maxsnap{G,\lambda}{s,z}$-1$ and \minsnap{G',\lambda'}{s,z}$=$\minsnap{G,\lambda}{s,z}$-1$. This contradicts the fact that $(G,\lambda)$ minimizes $|V(G)|+|E(G)| + \tau(\lambda)$. 
    Now, suppose that there is a cycle $C$ containing $s$ and $z$. As $sz \notin E(G)$, we have that such cycle has size at least~4. As $G$ has no $M_5$ as m-topological minor, there are no paths between the vertices of $C$ that are not contained in $C$. Also, as $G$ has no $M_4$ as m-topological minor, all the multiedges of $C$ have multiplicity~1. In other words, the 2-connected component containing $C$ is formed just by $C$ itself, with each multiedge of C having multiplicity~1. In particular, there are only two paths connecting $s$ and $z$. If these two paths are \snapdisjoint\space\tpath{s,z}s, we take the timesteps of the two edges incident to~$z$ to obtain a \snapcut{s,z}\space as it kills the only two \tpath{s,z}s. This is a contradiction as we then get \maxsnap{G,\lambda}{s,z}$=$ \minsnap{G,\lambda}{s,z}. So, if both paths between $s$ and $z$ are temporal paths, they must intersect at a timestep $\alpha$. But in this case, the set $\{\alpha\}$ would be a \snapcut{s,z}, again a contradiction. 
    Therefore,  we can suppose that there is no cycle containing $s$ and $z$. This means that, if we consider the decomposition of $G$ in bi-connected components $B_1,\ldots,B_k$, then we have that $s$ and $z$ are in different components. Moreover, as we are supposing $(G,\lambda),s,z$ that minimize $|V(G)|+|E(G)| + \tau(\lambda)$, the graph induced by the decomposition is a path and we can suppose that $s \in V(B_1)$ and $z \in V(B_k)$. Suppose that there is $i \in [k]$, such that $B_i$ contains at least~3 vertices.  Let $v \in V(B_i)\cap V(B_j)$ for some $j \in \{i-1,i+1\}$ and $C_1$ be a cycle of $B_i$ containing~$v$. If $B_j$ has at least~3 vertices, then we can find another cycle $C_2$ contained in $B_j$ such that $V(C_1)\cap V(C_2)= \{v\}$, this contradicts the fact that $G$ has no $M_3$ as m-topological minor. So, we can suppose that $|V(B_j)|= 2$. If the multiedge contained in $B_j$ has multiplicity~1, then let $\alpha$ be the timestep of such edge. We have that $\{\alpha\}$ is a \snapcut{s,z}, and then \maxsnap{G,\lambda}{s,z} = \minsnap{G,\lambda}{s,z} a contradiction. 
    Therefore we can assume that the multiedge of $B_j$ has multiplicity at least~2; however the graph induced by such multiedge and $C_1$ is an m-subdivision of the graph $M_2$, again a contradiction. 
    Thus, we can assume that bi-connected component $B_i$ contains a cycle, for every $i \in \{1,\ldots,k\}$. In other words, $U(G)$ is an $s,z$-path. If some of the multiedges of $G$ has multiplicity~1, then \minsnap{G,\lambda}{s,z} $\leq 1$ and we have the equality, a contradiction. We can say also that $|U(G)|=3$, as otherwise $G$ would have $M_1$ as m-topological minor. Now, we show that in such case $G$ must be Mengerian, thus finishing the proof.
    
    We show that, for a graph $H$ such that $U(H)$ is a path of size $3$ between $s$ and $z$ and any timefunction $\lambda$ in $H$, we have that \maxsnap{G,\lambda}{s,z} = \minsnap{G,\lambda}{s,z}. Let $V(H)=\{s,w,z\}$, which means that the multiedges of $H$ are $sw$ and $wz$. We use induction on the number of edges of $H$. The base of induction is when $sw$ and $wz$ both have multiplicity~1. 
    Then, either $\lambda(sw)\leq \lambda(wz)$, in which case \maxsnap{H,\lambda}{s,z}$=$ \minsnap{H,\lambda}{s,z}$=1$, or $\lambda(sw) > \lambda(wz)$, in which case \maxsnap{H,\lambda}{s,z}$=$ \minsnap{H,\lambda}{s,z}$=0$. 
    Now, suppose valid when $\lvert E(H')\rvert\le m$ and consider $\lvert E(H)\rvert = m+1$. By Proposition~\ref{prop:lambdainjective}, we can suppose that there are two edges appearing at same timestep $\alpha$, say $f$ and $g$. As each snapshot is a simple graph, we get that $f,g$ form a \tpath{s,z}. 
    Let $H' = H-\{f,g\}$. By induction hypothesis, we have that \maxsnap{H',\lambda}{s,z} = \minsnap{H',\lambda}{s,z}. Now let $\mathcal{P}$ be a set of \snapdisjoint\space \tpath{s,z}s and $S$ be a \snapcut{s,z} in $(H',\lambda)$. Then, $\mathcal{P}\cup \{(s,f,w,g,z)\}$ is a set of \snapdisjoint \tpath{s,z}s in $(H,\lambda)$ and $S\cup\{\alpha\}$ is a \snapcut{s,z} in $(H,\lambda)$. Therefore, \minsnap{H,\lambda}{s,z} = \minsnap{H',\lambda}{s,z} + 1= \maxsnap{H',\lambda}{s,z}+ 1 = \maxsnap{H,\lambda}{s,z}.
\end{proof}

Now we turn our attention to the recognition of Mengerian graphs, showing that it can be done in polynomial time. We observe that the proof of characterization of Mengerian graphs helps us to construct an algorithm of recognition of Mengerian graphs. We make the proper adaptation and prove the next theorem in Appendix~\ref{app:rec}.

\begin{theorem}\label{theorem:rec}
One can decide in polynomial time whether a graph $G$ has a graph in $\mathcal{K}$ as m-topological minor.
\end{theorem}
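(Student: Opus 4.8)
The plan is to reduce the recognition problem to finding a fixed-size topological minor, but with the twist that subdivisions must respect multiplicities. First I would recall that $\mathcal{K}=\mathcal{M}=\{M_1,\dots,M_5\}$ is a finite set of small graphs, so it suffices to give, for each fixed $M\in\mathcal{M}$, a polynomial-time test for whether $G$ contains $M$ as an m-topological minor; running all five tests and taking the disjunction finishes the proof. For a single $M$ with branch vertices $b_1,\dots,b_p$ ($p\le 6$ for all graphs in $\mathcal{M}$), I would iterate over all $\binom{|V(G)|}{p}\cdot p!$ choices of an ordered tuple of distinct vertices of $G$ to serve as the images of the branch vertices; this is polynomial since $p$ is a constant. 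For each such choice we must decide whether the required internally disjoint connecting paths exist with the correct multiplicity behaviour.

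The key subtlety is the multiplicity condition: m-subdividing a multiedge of multiplicity $r$ produces a path all of whose multiedges again have multiplicity $r$, and distinct m-subdivided multiedges of $M$ must be realized by internally vertex-disjoint paths in $G$ (sharing only the chosen branch vertices). So after fixing the branch images, I would handle each multiedge $xy$ of $M$ of multiplicity $r$ separately: we need an $x',y'$-path in $G$ (where $x',y'$ are the images of $x,y$) whose every multiedge has multiplicity at least $r$ — equivalently, a path in the subgraph $G_{\ge r}$ of $G$ consisting of those multiedges with multiplicity $\ge r$ — and all these paths, over the different multiedges of $M$, must be pairwise internally disjoint and avoid the other branch images. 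Since $r\in\{1,2\}$ for every multiedge appearing in $\mathcal{M}$ (as can be checked from Figure~\ref{fig:todos}), only the two subgraphs $G_{\ge 1}=U(G)$ and $G_{\ge 2}$ are relevant, and the number of required paths is the constant number of multiedges of $M$. Thus the problem reduces to a \textsc{Disjoint Paths} instance with a constant number $q$ of terminal pairs in a graph, each pair to be linked inside a prescribed subgraph ($G_{\ge 1}$ or $G_{\ge 2}$), with internal disjointness across the pairs and disjointness from the remaining branch images.

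To solve this fixed-$q$ disjoint-linkage-type problem I would invoke the standard polynomial-time algorithm for \textsc{$q$-Disjoint Paths} for fixed $q$ (Robertson–Seymour / Kawarabayashi–Kobayashi–Reed); the extra conditions — that each path lies in a specified subgraph and avoids a fixed set of forbidden vertices — are handled by first deleting the forbidden vertices and, for the path that must stay in $G_{\ge r}$, working in the appropriate subgraph, then gluing: since the subgraphs only shrink the ambient graph and we want internal disjointness, one can equivalently search for all $q$ paths simultaneously in a single auxiliary graph built by taking $q$ coloured copies of the relevant edge sets and requiring a linkage there. Alternatively, because the graphs in $\mathcal{M}$ are so small and have at most a constant number of edge-disjoint "branches", one can avoid invoking the heavy disjoint-paths machinery and instead argue directly via a constant number of nested breadth-first searches; I would pick whichever is cleanest to write. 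In all cases the running time is a polynomial whose degree depends only on $\mathcal{M}$, hence an absolute constant.

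The main obstacle I anticipate is \emph{bookkeeping the m-subdivision/multiplicity condition correctly} rather than any genuine algorithmic difficulty: one must be careful that (i) two multiedges of $M$ that are incident in $M$ give paths sharing exactly the common branch image and nothing else, (ii) a multiedge of multiplicity $r$ is matched to a path in $G$ whose multiedges have multiplicity \emph{exactly} what an m-subdivision produces — and here the definition says the subdivided edges inherit multiplicity $r$, so we need multiplicity $\ge r$ in $G$, with equality not required since $G$ only needs a \emph{subgraph} isomorphic to an m-subdivision of $M$ — and (iii) branch vertices of $M$ of degree $2$ (if any) are not accidentally absorbed into a subdivided path. Once these conditions are phrased precisely as a constant-size disjoint-paths instance in the subgraphs $U(G)$ and $G_{\ge 2}$, the polynomial-time bound is immediate, and I would conclude by noting that repeating the test for each of the five graphs in $\mathcal{M}$ and reporting ``non-Mengerian'' iff some test succeeds gives, together with Theorem~\ref{thm:Menger}'s characterization, the polynomial-time recognition claimed.
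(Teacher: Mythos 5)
Your high-level plan (enumerate images of the branch vertices of each fixed $M\in\mathcal{M}$, then realize the multiedges of $M$ by internally disjoint paths, with a multiplicity-$r$ multiedge realized inside $G_{\ge r}$) is a sensible and correct reformulation of m-topological-minor containment. The gap is in the central algorithmic step. The instances you produce are \emph{not} instances of the standard \textsc{$q$-Disjoint Paths} problem: for the graphs with mixed multiplicities (the cycle with a pendant multiplicity-$2$ multiedge, and the cycle with one doubled multiedge), some of the required paths must live in $G_{\ge 2}$ while others may use all of $U(G)$, and the paths are coupled only through vertex-disjointness. Robertson--Seymour does not handle per-path subgraph constraints, and neither of your two workarounds is an algorithm: taking $q$ coloured copies of the edge sets makes the $q$ paths vertex-disjoint for free (they live in different copies), so the linkage in the auxiliary graph no longer encodes the disjointness you need; and ``a constant number of nested breadth-first searches'' cannot decide even ordinary $2$-\textsc{Disjoint Paths}, which already requires the nontrivial Seymour/Shiloach/Thomassen machinery. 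So as written the proof does not establish polynomiality for $M_2$ and $M_4$. (For the patterns whose multiedges all have the same multiplicity the constraint disappears and your reduction to a single fixed-$q$ disjoint-paths instance is fine, modulo the routine care needed for degree-$2$ vertices and parallel branches between the same pair of branch vertices.)

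The gap is repairable, but only by exploiting the specific shapes of the five graphs rather than generic linkage machinery --- which is exactly what the paper does. Its proof computes the block (2-connected component) decomposition of $G$ and reads each forbidden structure off it directly: a multiplicity-$\ge 2$ block of size $2$ sharing a cut vertex with a block of size $\ge 3$ gives $M_2$; two blocks of size $\ge 3$ sharing a vertex give $M_3$; a block on $\ge 4$ vertices that is not an induced cycle gives $M_5$; such a block containing a multiedge of multiplicity $\ge 2$ gives $M_4$; and finally $M_1$ is found by looking for a path on four vertices in the subgraph of multiedges of multiplicity $\ge 2$. Each test is elementary and the disjointness of the realizing paths is guaranteed structurally by the block decomposition, so no disjoint-paths subroutine is ever needed. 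If you want to keep your per-pattern enumeration framework, you would need to replace the appeal to \textsc{Disjoint Paths} by ad hoc arguments of this kind (e.g., for $M_2$: a multiplicity-$\ge 2$ multiedge $vw$ together with a cycle through $v$ avoiding $w$, detectable via 2-connectivity of $G-w$ at $v$), at which point you have essentially reconstructed the paper's proof.
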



\section{Multiedge Cut}\label{sec:multiedge_cut}

Finally, in this section we study another version of the cut problem. Recall that two 
\tpath{s,z}s are multiedge disjoint if they do not share any multiedge, and that a multiedge temporal $s,z$-cut is a set $S$ of multiedges of $G$ intersecting every \tpath{s,z}. In this section, we investigate the following problem.

\vspace{0.3cm}

        \noindent $\le h$-\textsc{Multiedge Temporal s,z-cut}\\
        \textbf{Input.} A temporal graph $(G,\lambda)$, vertices $s,z\in V(G)$, and an integer $h$.\\
        \textbf{Question.} Is there a multiedge temporal $s,z$-cut in $(G,\lambda)$ of size at most $h$?

\begin{theorem}\label{thm:multiedgecut}
    $\le h$-\textsc{Multiedge Temporal s,z-cut} is $\NP$-complete, even if $\tau=2$.
\end{theorem}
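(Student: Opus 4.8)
The plan is to reduce from a classical $\NP$-complete problem that naturally encodes a "two-layer" separation requirement, since the constraint $\tau = 2$ means the temporal graph essentially has only two snapshots, $G_1 = \lambda^{-1}(1)$ and $G_2 = \lambda^{-1}(2)$. A temporal $s,z$-path in such a graph is a walk that first uses some edges of $G_1$ and then some edges of $G_2$; equivalently, it is a path through some "pivot" vertex $v$ where the $G_1$-portion is an $s,v$-path in $G_1$ and the $G_2$-portion is a $v,z$-path in $G_2$ (with $v \in \{s,z\}$ allowed). A multiedge cut must therefore hit, for every vertex $v$, every such concatenated path. The excerpt itself hints that the proof is a modification of an argument in~\cite{ZFMN.20}, so I would start from the reduction used there (likely from a covering/hitting-set-flavoured problem such as \textsc{Vertex Cover} or \textsc{Set Cover}, or from \textsc{Hitting Set}/\textsc{Clique-cover}) and adapt the gadgetry so that it only needs two timesteps.

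Concretely, I would take an instance of \textsc{Vertex Cover} (graph $H$, integer $k$) — or whichever base problem~\cite{ZFMN.20} uses — and build $(G,\lambda)$ as follows: put all of $H$'s structure into the first snapshot in a way that routing from $s$ forces a "choice" of a vertex of $H$, and put a complementary structure into the second snapshot so that, to block all $s,z$-paths, one is forced to select multiedges corresponding exactly to a vertex cover of $H$. The two snapshots correspond to the two "ends" of each relevant path. Membership in $\NP$ is immediate: a cut is a set of at most $h$ multiedges, and one can check in polynomial time (e.g.\ by a temporal BFS/DFS, or by a reachability computation in $G_1$ followed by one in $G_2$) that no temporal $s,z$-path survives its removal. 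For hardness, I would prove the two directions: (i) a vertex cover of size $k$ in $H$ yields a multiedge cut of the corresponding size $h$, by deleting the multiedges indexed by the cover and arguing every surviving path would have to pass through an uncovered edge-gadget, contradiction; (ii) a multiedge cut of size $\le h$ can be "normalized" — via exchange arguments that replace any "wasteful" chosen multiedge by one of the designated vertex-multiedges without increasing the size or losing the cut property — so that its set of chosen vertex-multiedges projects onto a vertex cover of $H$ of size $\le k$. This normalization step mirrors the style of the claims used in the proof of Theorem~\ref{thm:snapcut_W1hard}.

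The main obstacle I expect is engineering the gadget so that the reduction genuinely works with lifetime exactly $2$ rather than needing a longer time window: with only two snapshots there is very little room to "separate" gadgets in time, so all the combinatorial work must be done spatially in $G_1$ and $G_2$, and one must be careful that paths cannot "cheat" by staying entirely within one snapshot or by using an unintended mixed route. In particular, one needs to ensure that the cheapest way to destroy all $s,z$-paths is exactly to pick one multiedge per edge-gadget of $H$ in a covering pattern, and that picking snapshot-$2$ "exit" multiedges (the analogue of the $w$–$z$ edges in the previous proof) is never strictly cheaper. I would handle this by giving the "exit" portion of each path enough parallel multiedge redundancy (or by making those multiedges non-existent as singletons) so that cutting there is never advantageous, forcing the cut into the vertex-selection multiedges. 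Finally, I would remark — as the excerpt promises — that the construction uses no property specific to undirected graphs, so orienting all edges from $s$ towards $z$ gives the same reduction for directed multigraphs, establishing $\NP$-completeness there as well.
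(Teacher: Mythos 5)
You have correctly guessed the paper's source problem (\textsc{Vertex Cover}) and the overall shape of its argument: membership in $\NP$ is routine, the forward direction maps a vertex cover to a multiedge cut, and the backward direction normalizes an arbitrary small cut, via local exchange arguments, into one that projects onto a vertex cover. However, the proposal stops exactly where the proof begins: you never specify the reduction. Phrases such as ``put all of $H$'s structure into the first snapshot in a way that routing from $s$ forces a choice of a vertex'' and ``give the exit portion enough parallel multiedge redundancy'' describe desiderata, not a construction, and you yourself flag the gadget engineering for lifetime exactly $2$ as the main unresolved obstacle. Since the entire content of the theorem is that such a gadget exists, this is a genuine gap rather than an omitted routine detail; as written, neither direction of the equivalence can be checked.

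For comparison, the paper's construction attaches to each vertex $v$ of the \textsc{Vertex Cover} instance a path $s,x^1_v,x^2_v,x^3_v,x^4_v,z$ in which the middle multiedge $x^2_vx^3_v$ has multiplicity $2$ (one copy per timestep), and adds shortcut edges $sx^2_v$ and $x^3_vz$ so that, after normalization, the only two ways to kill this path are to cut the middle (cost $1$) or to cut both ends $sx^1_v$ and $x^4_vz$ (cost $2$). Each edge $vw$ of the instance is encoded by a vertex $f_{vw}$ creating a time-respecting path $s,x^1_v,f_{vw},x^4_w,z$ that bypasses both middles and can therefore only be destroyed by paying the extra unit at $v$ or at $w$; this yields the budget $n+k$ and drives the exchange arguments in the backward direction. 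Your sketch is compatible with this but does not contain it; to complete the proof you would need to produce this (or an equivalent) gadget, fix the timefunction explicitly, and verify, as the paper does in its case analysis, that every surviving temporal $s,z$-path is hit and that every small cut can be normalized without increasing its size.
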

\begin{proof}
We make a reduction from {\textsc Vertex Cover}, which consists of, given a simple graph $G$ and a positive integer $k$, deciding whether there exists a subset $S\subseteq V(G)$ such that $|S|\le k$ and every $e\in E(G)$ is incident to some $u\in S$; such a set is called a \emph{vertex cover} (of size at most $k$). So, consider $\mathcal{I} = (G, k)$ an instance of {\textsc Vertex Cover}. We construct a graph $G'$ with vertex set $V(G') = \{s,z\}\cup \{x^1_{v},x^2_{v},x^3_{v},x^4_{v} \colon v \in V(G)\}\cup \{f_{vw} \mid vw \in E(G)\}$. One can use Figure~\ref{fig:multicut} to follow the construction. Then, we add edges from $s$ to $x^1_v$ and $x^2_v$, and from $x^3_v$ and $x^4_v$ to $z$, for every $v\in V(G)$. Also, let $(x^1_v,x^2_v,x^3_v,x^4_v)$ form a path, and add, for each edge $vw\in E(G)$, edges $x^1_v f_{vw}$ and $f_{vw} x^4_w$. More formally, we have:
\[\begin{array}{rl}
  E(G') =   & \{sx^1_{v},sx^2_{v},x^3_{v}z, x^4_{v}z \mid v \in V(G)\}  \\
     & \cup \{x^{i}_v x^{i+1}_{v} \mid i \in \{1,2,3\}, v \in V(G)\} \\
     & \cup \{x^1_{v} f_{vw}, f_{vw}x^{4}_{w} \mid vw \in E(G)\}
\end{array} .\] 

Finally, add a second edge with endpoints $x^2_vx^3_v$, for each $v\in V(G)$. Since these are the only edges with multiplicity greater than~1, we will generally denote an edge by its endpoints, with the exception of these, which we denote by $e^1_v,e^2_v$. Now for each $i \in \{1,\ldots,4\}$ define $X_{i} = \{x^i_v \colon v \in V\}$, let $F=\{f_vw \mid vw \in E(G)\}$ and consider a timefunction $\lambda$ such that:

\[ \lambda(e) =\left\{\begin{array}{ll}
1 & \mbox{, if } e\in (\{s\})\times X_1)\cup (X_1 \times X_2) \cup (X_3 \times \{z\}) \cup (X_{1}\times F)\\
2 & \mbox{, if $e\in (\{s\}\times X_2) \cup (X_3 \times X_4) \cup (X_4 \times \{z\})\cup (F\times X_4)$, and}\\
i & \mbox{, if $e = e^i_v$ for some $v\in V(G)$.}
\end{array}\right. \]

\begin{figure}[h]
		\centering
			\includegraphics[scale=0.6]{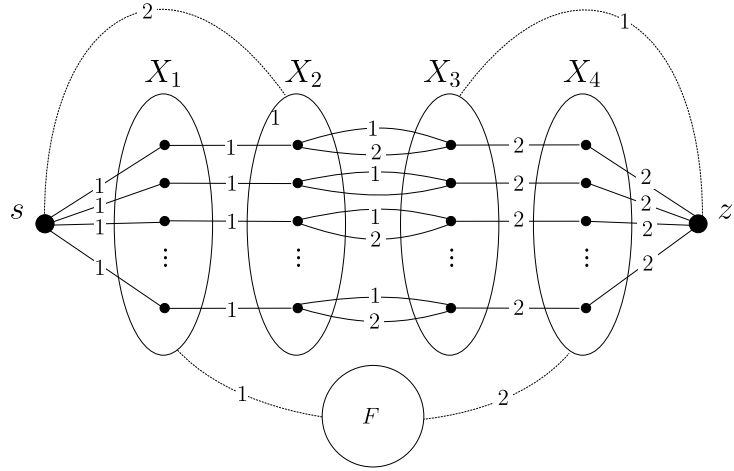}
		\caption{A representation of temporal graph $(G',\lambda)$. The dotted lines represent all the edges between the corresponding set of vertices. }
		\label{fig:multicut}
	\end{figure}

Because of space constraints, we present the proof of correctness in Appendix~\ref{app:multiedgecut}.

\end{proof}


        \section{Conclusion}\label{sec:conclusion}

        We have introduced the concept of snapshot disjointness and proved that the related paths and cut problems, when parameterized by the size of the solution, are both $\W[1]$-hard and $\XP$-time solvable. We then adapted to our context the definition of Mengerian graph introduced by Kempe, Kleinberg and Kumar~\cite{KKK.00}, giving also a characterization in the lines of the ones given in~\cite{KKK.00} and~\cite{IS.arxiv}, as well as a polynomial-time recognition algorithm. Since all our results concern only non-strict temporal paths, one can ask whether they also hold for strict paths.

Further open problems can be extracted from Table~\ref{table:results}. In particular, we ask whether the results for node departure disjoint paths and cuts presented in~\cite{MMS.19} for strict paths also hold for non-strict paths.

Finally, while Menger's Theorem is known to hold for edge disjoint paths~\cite{B.96,MMS.19}, it is also known not to hold for multiedge disjoint paths~\cite{IS.arxiv}. We reinforce the question posed in~\cite{IS.arxiv} about the characterization of Mengerian graphs in the context of multiedge disjoint paths.

        \bibliography{ref.bib}

\begin{thebibliography}{10}

\bibitem{AFG.17}
Amir Afrasiabi~Rad, Paola Flocchini, and Joanne Gaudet.
\newblock Computation and analysis of temporal betweenness in a knowledge
  mobilization network.
\newblock {\em Computational social networks}, 4(1):1--22, 2017.

\bibitem{B.etal.10}
Georg Baier, Thomas Erlebach, Alexander Hall, Ekkehard K{\"o}hler, Petr Kolman,
  Ond{\v{r}}ej Pangr{\'a}c, Heiko Schilling, and Martin Skutella.
\newblock Length-bounded cuts and flows.
\newblock {\em ACM Transactions on Algorithms (TALG)}, 7(1):1--27, 2010.

\bibitem{B.96}
Kenneth~A Berman.
\newblock Vulnerability of scheduled networks and a generalization of
  {M}enger's {T}heorem.
\newblock {\em Networks: An International Journal}, 28(3):125--134, 1996.

\bibitem{CFQS.12}
Arnaud Casteigts, Paola Flocchini, Walter Quattrociocchi, and Nicola Santoro.
\newblock Time-varying graphs and dynamic networks.
\newblock {\em International Journal of Parallel, Emergent and Distributed
  Systems}, 27(5):387--408, 2012.

\bibitem{CHMZ.19}
Arnaud Casteigts, Anne{-}Sophie Himmel, Hendrik Molter, and Philipp Zschoche.
\newblock Finding temporal paths under waiting time constraints.
\newblock In {\em 31st International Symposium on Algorithms and Computation,
  {ISAAC} 2020}, volume 181 of {\em LIPIcs}, pages 30:1--30:18, 2020.
\newblock \href {https://doi.org/10.4230/LIPIcs.ISAAC.2020.30}
  {\path{doi:10.4230/LIPIcs.ISAAC.2020.30}}.

\bibitem{param.book}
Marek Cygan, Fedor~V Fomin, {\L}ukasz Kowalik, Daniel Lokshtanov, D{\'a}niel
  Marx, Marcin Pilipczuk, Micha{\l} Pilipczuk, and Saket Saurabh.
\newblock {\em Parameterized algorithms}, volume~5.
\newblock Springer, 2015.

\bibitem{FMNR.22}
Eugen F\"{u}chsle, Hendrik Molter, Rolf Niedermeier, and Malte Renken.
\newblock {Temporal Connectivity: Coping with Foreseen and Unforeseen Delays}.
\newblock In James Aspnes and Othon Michail, editors, {\em 1st Symposium on
  Algorithmic Foundations of Dynamic Networks (SAND 2022)}, volume 221 of {\em
  Leibniz International Proceedings in Informatics (LIPIcs)}, pages
  17:1--17:17, Dagstuhl, Germany, 2022. Schloss Dagstuhl -- Leibniz-Zentrum
  f{\"u}r Informatik.
\newblock URL: \url{https://drops.dagstuhl.de/opus/volltexte/2022/15959}, \href
  {https://doi.org/10.4230/LIPIcs.SAND.2022.17}
  {\path{doi:10.4230/LIPIcs.SAND.2022.17}}.

\bibitem{GT.11}
Petr~A Golovach and Dimitrios~M Thilikos.
\newblock Paths of bounded length and their cuts: Parameterized complexity and
  algorithms.
\newblock {\em Discrete Optimization}, 8(1):72--86, 2011.

\bibitem{Holme.15}
Petter Holme.
\newblock Modern temporal network theory: a colloquium.
\newblock {\em The European Physical Journal B}, 88(9):234, 2015.

\bibitem{ILMS.arxiv}
Allen Ibiapina, Raul Lopes, Andrea Marino, and Ana Silva.
\newblock Menger's theorem for temporal paths (not walks).
\newblock {\em ArXiv}, arXiv:2206.15251, 2022.

\bibitem{IS.arxiv}
Allen Ibiapina and Ana Silva.
\newblock Mengerian graphs: Characterization and recognition.
\newblock {\em arxiv}, arXiv:2208.06517, 2022.

\bibitem{IPS.82}
Alon Itai, Yehoshua Perl, and Yossi Shiloach.
\newblock The complexity of finding maximum disjoint paths with length
  constraints.
\newblock {\em Networks}, 12(3):277--286, 1982.

\bibitem{KKK.00}
David Kempe, Jon Kleinberg, and Amit Kumar.
\newblock Connectivity and inference problems for temporal networks.
\newblock {\em Journal of Computer and System Sciences}, 64:820--842, 2002.

\bibitem{LVM.18}
Matthieu Latapy, Tiphaine Viard, and Cl{\'e}mence Magnien.
\newblock Stream graphs and link streams for the modeling of interactions over
  time.
\newblock {\em Social Network Analysis and Mining}, 8(1):61, 2018.

\bibitem{LMS.90}
Chung-Lun Li, S~Thomas McCormick, and David Simchi-Levi.
\newblock The complexity of finding two disjoint paths with min-max objective
  function.
\newblock {\em Discrete Applied Mathematics}, 26(1):105--115, 1990.

\bibitem{MMS.19}
George~B Mertzios, Othon Michail, and Paul~G Spirakis.
\newblock Temporal network optimization subject to connectivity constraints.
\newblock {\em Algorithmica}, 81(4):1416--1449, 2019.

\bibitem{SP.78}
Yossi Shiloach and Yehoshua Perl.
\newblock Finding two disjoint paths between two pairs of vertices in a graph.
\newblock {\em Journal of the ACM (JACM)}, 25(1):1--9, 1978.

\bibitem{west.book}
Douglas~Brent West et~al.
\newblock {\em Introduction to graph theory}, volume~2.
\newblock Prentice hall Upper Saddle River, 2001.

\bibitem{XFJ.03}
B~Bui Xuan, Afonso Ferreira, and Aubin Jarry.
\newblock Computing shortest, fastest, and foremost journeys in dynamic
  networks.
\newblock {\em International Journal of Foundations of Computer Science},
  14(02):267--285, 2003.

\bibitem{ZFMN.20}
Philipp Zschoche, Till Fluschnik, Hendrik Molter, and Rolf Niedermeier.
\newblock The complexity of finding small separators in temporal graphs.
\newblock {\em J. Comput. Syst. Sci.}, 107:72--92, 2020.
\newblock \href {https://doi.org/10.1016/j.jcss.2019.07.006}
  {\path{doi:10.1016/j.jcss.2019.07.006}}.

\end{thebibliography}

\appendix
\section{Proof of Propositions~\ref{prop:nonmeng} and~\ref{prop:subgraphClosed}}\label{app:propositions}


\begin{proof}[Proof of Proposition~\ref{prop:nonmeng}]
    Let $G$ be a non-Mengerian graph and consider $\lambda \in E(G) \to \N$ and $s,z \in V(G)$ to be such that \maxsnap{G,\lambda}{s,z}$<$ \minsnap{G,\lambda}{s,z}. Also, suppose that $H$ is obtained from $G$ by m-subdividing a multiedge, say $xy$. We construct a function $\lambda'$ from $\lambda$ that proves that $H$ is also non-Mengerian. 
		
    		Let $D\subseteq E(G)$ be the set of edges of $G$ with endpoints $xy$, and denote by $v_{xy}$ the vertex of $H$ created by the m-subdivision of $xy$.  Moreover, denote by $D_x$ and $D_y$ the sets of edges of $H$ with endpoints $xv_{xy}$ and $v_{xy}y$, respectively. Finally, define $\lambda'$ to be such that  $\lambda'(e) = \lambda(e)$, for every $e \in E(G)\setminus D$, and $\lambda'(D_x) = \lambda'(D_y) = \lambda(D)$. We show that \maxsnap{G,\lambda}{s,z}$=$\maxsnap{H,\lambda'}{s,z} and \minsnap{G,\lambda}{s,z}$=$\minsnap{H,\lambda'}{s,z}, which finishes our proof. 
    		
    		Given a set of \snapdisjoint \tpath{s,t}s in $(G,\lambda)$, if some of these paths, say $P$, uses the edge $xy$, then in $(H,\lambda')$ we can substitute such edge by an edge in $D_x$ and another in $D_y$ active at the same time to obtain a temporal path $P'$ such that $V(P')=V(P)\cup \{v_{xy}\}$. This gives us a set of \snapdisjoint\space\tpath{s,t}s in $(H,\lambda')$. In the other direction, if it is given a set of \snapdisjoint\space\tpath{s,t}s in $(H,\lambda')$, if some of them uses the vertex $v_{xy}$, let $f_j$ be the edge used in $D_j$ for $j \in \{x,y\}$. Suppose without loss of generality that $\lambda(f_x)\geq \lambda(f_y)$. Then we substitute both edges incident to $v_{xy}$ by an edge in $D$ appearing at time $\lambda(f_x)$. This implies that \maxsnap{G,\lambda}{s,z}$=$\maxsnap{G,\lambda'}{s,z}. To see that \minsnap{G,\lambda}{s,z}$=$\minsnap{H,\lambda'}{s,z} one can need to recall that $\lambda'(D_x)=\lambda'(D_y) = \lambda(D)$.
\end{proof}

 \begin{proof}[Proof of Proposition~\ref{prop:subgraphClosed}]
		To prove necessity, suppose that $H\subseteq G$ is non-Mengerian, and let $s,z,\lambda$ be such that \maxsnap{H,\lambda}{s,z}$<$ \minsnap{H,\lambda}{s,z}. Consider the timefunction  $\lambda'$  in $E(G)$ defined as follows. 
  
		\[\lambda'(e) = \left\{\begin{array}{ll}
			\lambda(e)+1 & \mbox{, for every } e\in E(H),\\
			1 & \mbox{, for every $e\in E(G)\setminus E(H)$ with endpoints $yt$, and}\\
			\max\lambda(E(H))+2 & \mbox{, otherwise.}
		\end{array}\right.\]

		Because $H\subseteq G$ and $\lambda\subseteq \lambda'$, note that we get \maxsnap{H,\lambda}{s,z}$\leq$\maxsnap{G,\lambda'}{s,z} and \minsnap{H,\lambda}{s,z}$\leq$ \minsnap{G,\lambda'}{s,z}. Therefore it suffices to prove \maxsnap{G,\lambda'}{s,z}$\leq$ \maxsnap{H,\lambda}{s,z} and \minsnap{G,\lambda'}{s,z}$\leq$\minsnap{H,\lambda}{s,z}. Indeed, these hold because the timefunction $\lambda'$ does not allow for the existence of a \tpath{s,t} not contained in $H$.
	\end{proof}

\section{Proof of Theorem~\ref{theorem:cuts}}\label{app:cuts}

            Let $(G,\lambda),s,z,h$ as in the hypothesis of the theorem. For each subset $S\subseteq [\tau]$ of size $h$, define $G_S$ such that $V(G_S)=V(G)$ and $E(G_S)=\{e \in E(G) \mid \lambda(e)\notin S\}$. Define also $\lambda_S(e)=\lambda(e)$ for all $e \in E(G_S)$.    
            Now, by the definition of $G_S$, any \tpath{s,z} in $(G_S,\lambda_S)$ is a \tpath{s,z} in $(G,\lambda)$ that does not use edges active at timesteps in $S$. Reciprocally, every \tpath{s,z} in $G$ that does not use edges active at timesteps in $S$ is a \tpath{s,z} in $(G_S,\lambda_S)$.
            As testing if there is a \tpath{s,z} in $(G_S,\lambda_S)$ can be done in polynomial time~\cite{XFJ.03} and $(G_S,\lambda_S)$ can be constructed in $O(n+m)$ time, it suffices to apply this test to $(G_S,\lambda_S)$ for every $S\subseteq [\tau]$ of size $h$.  Since there are at most $\tau^h$ such sets, the theorem follows. 

\section{Proof of correctness of reduction in Theorem~\ref{theo:snappaths}}\label{app:snappaths}

      We show that $(G',\lambda)$ has $k$ \snapdisjoint\space \tpath{s,z}s if and only if $G$ has an independent set of size at least $k$. By the definition of $G'$, all \tpath{s,z}s are of type $Q_u$ for some $u \in V(G)$. Therefore, it suffices to show that a subset $S\subseteq V(G)$ is an independent set of $G$ if and only if $\{Q_u \mid u \in S\}$ is a set of \snapdisjoint\space \tpath{s,z}s in $(G',\lambda)$. Suppose first that $S$ is an independent set, and suppose by contradiction that $u_1,u_2 \in S$ are such that $Q_{u_1}$ and $Q_{u_2}$ are not \snapdisjoint. Then there exists $e_1 \in E(Q_1)$ and $e_2\in E(Q_2)$ such that $\lambda(e_1)=\lambda(e_2)$. By construction, this means that $e_1 = e_2$, and since $u_1\neq u_2$, we get that actually this edge has endpoints $u_1u_2$, a contradiction as $S$ is an independent set. Thus, $\{Q_u \mid u \in S\}$ is a set of \snapdisjoint\space \tpath{s,z}s. Finally, observe that if $e_i = vw\in E(G)$, then $\lambda(Q_v)\cap \lambda(Q_w) = \{i\}$, which directly implies that if $\{Q_u \mid u \in S\}$ is a set of \snapdisjoint\space \tpath{s,z}s, then $S$ cannot contain any pair of adjacent vertices. 


\section{Proof of claims in Theorem~\ref{thm:snapcut_W1hard}}\label{app:claims}

          \begin{proof}[Proof of Claim~\ref{claim1}]
          Let $e^{i,j}_{\ell_1}$ and $e^{i,j}_{\ell_2}$ be the chosen edges for $(i,j)$. Denote by $x_1y_1$ and $x_2y_2$  the endpoints of $e^{i,j}_{\ell_1}$ and $e^{i,j}_{\ell_2}$, respectively. Suppose without loss of generality that $\{x_1,x_2\}\subseteq X_i$ and $\{y_1,y_2\}\subseteq X_j$. Now suppose that there exists $x\in X_i\cap S$. 
          If $x\in \{x_1,x_2\}$, say $x=x_1$, then replace $f_{i,j}+\ell_2$ in $S$ by $y_1$, obtaining $S'$. Observe that in this case $e^{i,j}_{\ell_1}$ becomes an open edge, but such that $\{t_{x_1},t_{y_1}\}\subseteq S'$. Hence $S'$ is still a \snapcut{s,z} containing fewer chosen edges. And if $x\notin \{x_1,x_2\}$, then let $xy$ be any edge incident in $x$ such that $y\in X_j$.  We can suppose that such edge exists as otherwise $x$ cannot be in any multicolored $k$-clique and then we can remove it from $G$. Let $\ell\in [m]$ be such that $e^{i,j}_\ell$ has endpoints $xy$. Observe that Fact~\ref{fact3} also tells us that $S' = (S\setminus \{f_{i,j}+\ell_2\})\cup \{f_{i,j}+\ell\}$ is a \snapcut{s,z}. We can then apply the previous argument to replace $f_{i,j}+\ell_1$ by $t_y$ to again obtain a \snapcut{s,z} with fewer chosen edges.
          \end{proof}

          \begin{proof}[Proof of Claim~\ref{claim2}]
          We can clearly suppose that $G$ is connected, as otherwise the answer to \textsc{Multicolored $k$-Clique} is trivially ``no''. 
          Now, let $I_1$ be the set of indices $\{i\in [k]\mid V_S\cap X_i = \emptyset\}$ and $I_2 = S\setminus I_1$. Suppose first that $I_1\neq \emptyset$ and $I_2\neq \emptyset$, and consider $i_1\in I_1$ and $i_2\in I_2$. By Claim~\ref{claim1} and because $i_2\in I_2$, we know that $(i_1,i_2)$ is open. So let $e^{i_1,i_2}_\ell$ be the open edge for $(i_1,i_2)$, and let $xy$ be its endpoints, with $x\in X_{i_1}$ and $y\in X_{i_2}$. By Fact~\ref{fact2}, we get that $\{t_x,t_y\}\subseteq S$, i.e., $\{x,y\}\subseteq V_S$, contradicting Claim~\ref{claim1} as $x\in X_{i_1}$ and $i_1\in I_1$. 

          So either $I_1=\emptyset$ or $I_2 = \emptyset$. Observe that if $I_1 = \emptyset$, then $V_S\cap X_i\neq\emptyset$, for every $i\in [k]$, and the claim follows from Claim~\ref{claim1}. And if $I_2 = \emptyset$, then by Fact~\ref{fact3} we get that $S$ contains two edges for every pair $i,j$, totalling $\lvert S\rvert = 2{k\choose 2}$. Since $\lvert S\rvert \le {k\choose 2}+2$, we get that this happens only if $k\le 3$, in which case \textsc{Multicolored $k$-Clique} is polynomial-time solvable.
          \end{proof}


\section{Proof of Proposition~\ref{prop:tstrict}}\label{app:tstrict}
    To observe that \minsnap{G,\lambda}{s,z}$ > 1$, one just needs do verify that for each timestep, there is a \tpath{s,z} not using this timestep. Now, for the cases $G \in \{M_1,M_2,M_3,M_4\}$, suppose \maxsnap{G,\lambda}{s,z}$=$\minsnap{G,\lambda}{s,z} and let $Q,J$ be \snapdisjoint\space\tpath{s,z}s. In each of those cases, there are only two edges incident to $s$, one active at timestep~1 and other at timestep~2. Hence, one of these paths, say $Q$, starts at timestep~2. Observe that in this case, $Q$ can only finish through edges active at timestep~3. Therefore, $J$ cannot use timestep~2 nor~3, however, all edges incident to~$z$ are active at timesteps~2 or~3, a contradiction as there is no \tpath{s,z} contained in the first snapshot. Finally, suppose $G = M_5$; we will apply a similar argument. So, let $Q,J$ be \snapdisjoint\space\tpath{s,z}s. Note that one of them, say $Q$, must use the edge incident to $s$ active at timestep~3 and, therefore, finishes using the edge incident to $z$ active at timestep~4. It follows that $J$ is not allowed to use timestep~$3$ nor~$4$, a contradiction as all edges incident to~$z$ are only active at such timesteps.


\section{Proof of Theorem~\ref{theorem:rec}}\label{app:rec}
    First, we find a decomposition of $G$ in 2-connected components, $B_1,\dots,B_k$. This can be done in $O(m+n)$ (see e.g.~\cite{west.book}). 

    We consider the set of all components $B_i$ such that $|V(B_i)|=2$ and the multiedge contained in $B_i$ has multiplicity at least~2, let $D$ be such set. Then, we test if some component $B_i$ in $D$ has vertex in common if a component $B_i$ of size at least~3. This takes $O(k^2)$ times. If the answer is positive, then we would have that $G$ has $M_2$ as m-topological minor. Then we can suppose the following:
    
\vspace{0.3cm}
 \noindent   \textbf{2. }No component in $D$ share a vertex with other component that has at least~3 vertices.
\vspace{0.3cm}

    Now, we separate all components that have at least three vertices and test if two of them share a vertex, this step takes $O(k^2)$ times. If the answer is true for some two components, then $G$ has $M_3$ as m-topological minor. Therefore, we can suppose:
    
\vspace{0.3cm}
  \noindent  \textbf{3. }Components $B_i$ and $B_j$ of size at least~3 do not share vertex.
\vspace{0.3cm}

    Now, we look to the components $B_i$ such that $|V(B_i)|\geq 4$. As $B_i$ is two connected, it has a cycle $C$ that we can find in $O(n^2)$. Then we check if there is an edge $e \in E(B_i)\setminus E(C)$. If the answer is positive, then as $B_i$ is 2-connected, there is a chord in $C$ containing $e$. This implies that $G$ has $M_5$ as m-topological minor. So, we have the following property.
    
\vspace{0.3cm}
   \noindent \textbf{4. }Each component $B_i$ of size at least~4 is such that $U(B_i)$ is a cycle.
\vspace{0.3cm}

    With such observation, we now can test if the components $B_i$ of size at least~4 contains some multiedge with multiplicity at least~2. If the answer is positive we are done as it would lead to $B_i$ having $M_4$ as m-topological minor. So, we can suppose otherwise:
    
\vspace{0.3cm}
   \noindent \textbf{5. }Each component $B_i$ of size at least~4 contains no multiedges of multiplicity at least~2.
\vspace{0.3cm}

    Now we show that the properties 2-5 implies that $G$ has no graphs in $\mathcal{M}$ as m-topological minors. One just need to observe that every cycle is contained in a 2-connected component. So the properties 2-5, assure us that $G$ has no $M_2$,$M_3$,$M_4$ or $M_5$ as m-topological minors.

    Now, we only need to test if $G$ has $M_1$ as m-topological minor. We can consider the graph $G'$ obtained from $G$ but excluding the multiedges with multiplicity~1. Then, we test if $G'$ has a path with at least~4 vertices. The answer is positive if and only if $G$ has $M_1$ as m-topological minor.  This finishes the recognition.


\section{Proof of correctness of reduction in Theorem~\ref{thm:multiedgecut}}\label{app:multiedgecut}

We prove that $G$ has a vertex cover of size at most $k$ if and only if $(G',\lambda)$ has a multiedge temporal $s,z$-cut of size at most $n+k$. Given a solution $S$ of {\textsc Vertex Cover}, we can define $S' = \{sx^{1}_{v}, x^{4}_{v}z \colon v \in S\} \cup \{x^{2}_{v}x^{3}_{v} \colon v \notin S\}$. It remains to argue that $S'$ separates $s$ from $z$, or more formally, that there is no \tpath{s,z} in $(G'-S^*,\lambda)$, where $S^*$ contains every $e\in E(G')$ such that the endpoints of $e$ are in $S'$. Notice that if a \tpath{s,z} does not use edges between $X_{1}$ and $F$, then it contains one of the following  paths:
$(s,x^1_v,x^2_v,x^3_v,x^4_v,z)$, $(s,x^2_v,x^3_v,x^4_v,z)$, or 
$(s,x^1_v,x^2_v,x^3_v,z).$ In any case such path intersects $S'$. Now suppose that a \tpath{s,z}, $P$, uses an edge $x^1_vf_{vw}$, which implies that it also uses $f_{vw}x^4_{w}$ and therefore it arrives in $z$ at timestep~2. Note that the only edges active at timeste~2 incident to $x^4_w$ are incident either to $F$, or to $x^3_{w}$, or to $z$. In the former case, we hit a dead end because $f_{uw}$ has only one edge incident to it in timestep~2, for every $u\in N(w)$. 
This also happens in the second case, since from $x^3_w$ one can only go to $x^2_w$, hitting again a dead end. We then get that if $P$ contains $x^1_vf_{vw}$, then it must also contain $x^4_w z$. 
By a similar argument one can also show that it must contain $s x^1_v$ too. As $vw \in E(G)$, at least one of $v$ and $w$ are in $S$, and by construction we get that $P$ uses some multiedge of $S'$, as we wanted to show.

Now let $S'$ be a multiedge temporal $s,z$-cut of size at most $n+k$. For each $v \in V(G)$, let the set of multiedges $\{sx^1_v,x^1_vx^2_v,x^2_vx^3_v,x^3_vx^4_v,x^4_vz\}$ be denoted by $A_v$, and let $S'_{v} = S' \cap A_{v}$. Because $A_v$ forms a \tpath{s,z}, we know that $S'_v\neq \emptyset$. One can notice that every \tpath{s,z} using $x^1_{v}x^2_{v}$ also uses $sx^{1}_{v}$, and in the same way every \tpath{s,z} using $x^3_vx^4_v$ uses $x^4_vt$. Then by changing $S'$ if necessary, we can suppose that $S'_{v}$ is a non-empty subset of $\{sx^{1}_v, x^2_{v}x^3_v, x^4_v z\}$. Now we show that we can actually suppose that $S'\cap A_v$ is either $\{sx^{1}_v, x^4_vt\}$ or $\{x^2_{v}x^3_v\}$. We do it by analysing the cases where this does not happen.

\begin{itemize}
    \item $S'_{v} = \{sx^{1}_v\}$ or $S'_{v} = \{x^{4}_v z\}$. We just solve the first subcase as the second is similar. Notice that $(s,x^2_v,x^3_v,x^4_v,z)$ is a \tpath{s,z}, and that the only edge in this path that can be in $S'$ is $sx^2_v$ because of the case being analyzed. So, removing $s x^{2}_v$ and adding $x^4_v$ to $S'$ maintains the property of being a multiedge temporal $s,z$-cut and turns $S'_{v}$ into $\{sx^1_v,x^4_v z\}$.
    
    \item $S'_{v} = \{sx^1_v, x^2_{v}x^3_v\}$ or $S'_v = \{x^2_{v}x^3_v, x^4_v z \}$. In both subcases we can remove $x^2_vx^3_v$ and add either $x^4 z$ (in the first case) or $sx^1_v$ (in the second one). 
\end{itemize}

Now we can define $S = \{v \in V \colon S_{v} = \{sx^{1}_v, x^4_vt\}\}$. The desired property $|S| \leq k$ follows from the fact that $1\le |S'_v|\le 2$ for every $v\in V(G)$, and that $|S'|\le n+k$. Finally, suppose that $vw \in E(G)$ is such that $S\cap \{v,w\}=\emptyset$. Then $(s,x^1_v,x^4_w,z)$ is a \tpath{s,z} not passing through the edges in $S'$, a contradiction.

\end{document}